\documentclass[
    aps,
    reprint,
    superscriptaddress,
    longbibliography,
    ]{revtex4-2}

\usepackage{float}
\makeatletter
\let\newfloat\newfloat@ltx
\makeatother

\usepackage{algorithm}
\usepackage{algorithmicx}
\usepackage{algpseudocode}
\renewcommand{\algorithmicrequire}{\textbf{Input:}}
\renewcommand{\algorithmicensure}{\textbf{Output:}}

\usepackage{graphicx}
\usepackage{amsmath,amssymb,bm}
\usepackage{amsthm}
\theoremstyle{definition}
\newtheorem{definition}{Definition}
\newtheorem{theorem}{Theorem}

\usepackage{braket}
\usepackage{multirow}
\usepackage{mathtools}
\mathtoolsset{showonlyrefs}
\usepackage{hyperref}
\usepackage{xcolor}

\begin{document}

\title{Even More Efficient Soft-Output Decoding with Extra-Cluster Growth and Early Stopping}

\author{Kaito Kishi}
\thanks{These authors contributed equally to this work.\\
Email: kishi.kaito@fujitsu.com}
\affiliation{Quantum Laboratory, Fujitsu Research, Fujitsu Limited, 4-1-1 Kawasaki, Kanagawa 211-8588, Japan}
\affiliation{
Fujitsu Quantum Computing Joint Research Division,
Center for Quantum Information and Quantum Biology, Osaka University, 1-2 Machikaneyama, Toyonaka, Osaka, 565-8531, Japan
}

\author{Riki Toshio}
\thanks{These authors contributed equally to this work.\\
Email: kishi.kaito@fujitsu.com}
\affiliation{Quantum Laboratory, Fujitsu Research, Fujitsu Limited, 4-1-1 Kawasaki, Kanagawa 211-8588, Japan}
\affiliation{
Fujitsu Quantum Computing Joint Research Division,
Center for Quantum Information and Quantum Biology, Osaka University, 1-2 Machikaneyama, Toyonaka, Osaka, 565-8531, Japan
}

\author{Jun Fujisaki}
\affiliation{Quantum Laboratory, Fujitsu Research, Fujitsu Limited, 4-1-1 Kawasaki, Kanagawa 211-8588, Japan}
\affiliation{
Fujitsu Quantum Computing Joint Research Division,
Center for Quantum Information and Quantum Biology, Osaka University, 1-2 Machikaneyama, Toyonaka, Osaka, 565-8531, Japan
}

\author{Hirotaka Oshima}
\affiliation{Quantum Laboratory, Fujitsu Research, Fujitsu Limited, 4-1-1 Kawasaki, Kanagawa 211-8588, Japan}
\affiliation{
Fujitsu Quantum Computing Joint Research Division,
Center for Quantum Information and Quantum Biology, Osaka University, 1-2 Machikaneyama, Toyonaka, Osaka, 565-8531, Japan
}

\author{Shintaro Sato}
\affiliation{Quantum Laboratory, Fujitsu Research, Fujitsu Limited, 4-1-1 Kawasaki, Kanagawa 211-8588, Japan}
\affiliation{
Fujitsu Quantum Computing Joint Research Division,
Center for Quantum Information and Quantum Biology, Osaka University, 1-2 Machikaneyama, Toyonaka, Osaka, 565-8531, Japan
}

\author{Keisuke Fujii}
\affiliation{
Fujitsu Quantum Computing Joint Research Division,
Center for Quantum Information and Quantum Biology, Osaka University, 1-2 Machikaneyama, Toyonaka, Osaka, 565-8531, Japan
}

\affiliation{
Graduate School of Engineering Science, Osaka University,
1-3 Machikaneyama, Toyonaka, Osaka, 560-8531, Japan
}
\affiliation{
Center for Quantum Information and Quantum Biology, Osaka University, 560-0043, Japan
}
\affiliation{
RIKEN Center for Quantum Computing (RQC), Wako Saitama 351-0198, Japan
}

\date{\today}

\begin{abstract}
    In fault-tolerant quantum computing, soft outputs from real-time decoders play a crucial role in improving decoding accuracy, post-selecting magic states, and accelerating lattice surgery.
    A recent paper by Meister \textit{et al}. [arXiv:2405.07433 (2024)] proposed an efficient method to evaluate soft outputs for cluster-based decoders, including the Union-Find (UF) decoder.
    However, in parallel computing environments, its computational complexity is comparable to or even surpasses that of the UF decoder itself, resulting in a substantial overhead.
    Furthermore, this method requires global information about the decoding graph, making it poorly suited for existing hardware implementations of the UF decoder on Field-Programmable Gate Arrays (FPGAs).
    In this paper, to alleviate these issues, we develop more efficient methods for evaluating high-quality soft outputs in cluster-based decoders by introducing several \textit{early-stopping} techniques.
    Our central idea is that the precise value of a large soft output is often unnecessary in practice.
    Based on this insight, we introduce two types of novel soft-outputs: the \textit{bounded cluster gap} and the \textit{extra-cluster gap}.
    The former reduces the computational complexity of Meister's method by terminating the calculation at an early stage.
    Our numerical simulations show that this method achieves improved scaling with code distance $d$ compared to the original proposal.
    The latter, the extra-cluster gap, quantifies decoder reliability by performing a small, additional growth of the clusters obtained by the decoder.
    This approach offers the significant advantage of enabling soft-output computation without modifying the existing architecture of FPGA-implemented UF decoders.
    These techniques offer lower computational complexity and higher hardware compatibility, laying a crucial foundation for future real-time decoders with soft outputs.
\end{abstract}

\maketitle

\section{Introduction}

Quantum computers hold great promise for a wide range of applications, including quantum chemistry~\cite{cao2019quantumchemistry}, cryptography~\cite{shor1994algo, gidney2025howtofactor}, and machine learning~\cite{biamonte2017qml}.
Realizing these applications requires fault-tolerant quantum computers (FTQCs), which rely on the quantum error correction (QEC) schemes.
Among the various QEC codes, the surface code~\cite{kitaev2003faulttolerant, bravyi1998quantumcodesboundary} is a particularly promising candidate due to its high error threshold and its implementation requiring only nearest-neighbor interactions.

In QEC, a decoder estimates the errors that have occurred on the logical qubits.
The accuracy of the decoder directly impacts the logical error probability.
Therefore, if accuracy were the sole concern, a maximum-likelihood (ML) decoder would be the optimal choice.
This decoder exactly identifies the most probable logical error, while causing an exponential time overhead in general.
However, real-time decoding is required repeatedly throughout a quantum computation, for instance, to handle non-Clifford gates.
If the decoding time exceeds the syndrome generation time, the \textit{backlog problem} occurs, leading to an exponential increase in total computation time~\cite{terhal2015backlog}.
Consequently, a practical decoder must achieve a balance between high accuracy and high speed.

Recently, quantifying the reliability of decoder's estimates has emerged as a promising solution to the above issue~\cite{toshio2025decoderswitching}.
This reliability metric is commonly referred to as a decoder's \textit{soft output}, which provides \textit{soft information} about the confidence in a decoding result.
Such information plays a pivotal role in the \textit{decoder-switching} framework proposed in Ref.~\cite{toshio2025decoderswitching}, which combines paired complementary decoders adaptively to realize high-speed, high-accuracy real-time decoding.
More specifically, in this framework, a fast low-accuracy soft-output decoder (\textit{weak decoder}) is used for usual rounds, while a slower high-accuracy decoder (\textit{strong decoder}) is invoked only when the soft output of the weak decoder indicates low confidence.
This enables us to achieve the accuracy of the strong decoder at the high decoding speed of the weak one.
The libra decoder~\cite{jones2024improvedaccuracy} employs a similar concept, running an ensemble of decoders only for low-confidence results to boost accuracy with minimal overhead.

Beyond optimizing the decoding process itself, soft information has a broad range of other applications.
For instance, it can be used for post-selection to enhance the effective code distance by discarding outcomes deemed unreliable~\cite{smith2024mitigatingerrorsinlq,meister2024efficientsoftoutput,dinca2025errormitigation,zhou2025errormitigation,sunami2025entanglementboosting}.
A similar technique is applied to filter states during magic state distillation~\cite{bombin2024faulttolerantmsp} and cultivation~\cite{gidney2024msc,hirano2025efficientmsc}.
In concatenated codes, the soft output from an inner code can be passed as soft information to an outer code to improve overall performance~\cite{gidney2025yokedsurfacecodes,meister2024efficientsoftoutput}.
Furthermore, soft information has recently been proposed to dynamically reduce the runtime of lattice surgery operations~\cite{akahoshi2025runtimereduction}.

While soft output can be naturally obtained from methods like tensor network decoders, these approaches are computationally expensive, often requiring exponential time.
More recently, the concept of the \textit{complementary gap} (also known as the logical gap) has been introduced, enabling the efficient calculation of soft output~\cite{gidney2025yokedsurfacecodes,bombin2024faulttolerantmsp,smith2024mitigatingerrorsinlq}.
Subsequently, Meister \textit{et al}. proposed an efficient method for computing soft output specifically for cluster-based decoders~\cite{meister2024efficientsoftoutput}.
In this paper, we refer to this method as the \textit{cluster gap} (also known as the swim distance~\cite{dinca2025errormitigation}).
With recent progress in formulating soft output for qLDPC codes~\cite{lee2025efficientpostselection}, its importance in FTQC studies continues to grow.

To realize practical real-time decoders, minimizing the computational overhead of soft-output calculation is critical. However, existing methods such as the complementary and cluster gaps introduce non-negligible time overhead, often comparable to the decoding algorithm itself~\cite{jones2024improvedaccuracy,meister2024efficientsoftoutput}. This computational cost is particularly problematic for Union-Find (UF) decoders, which are typically designed for parallel implementation on Field-Programmable Gate Arrays (FPGAs)~\cite{liyanage2024heliosv2,liyanage2025heliosv3,valentino2025quekuf,heer2023noveluf,heer2023achievingscalable,barber2025collisionclustering}. In such hardware implementations, the soft-output overhead becomes relatively significant since the overhead of the UF decoder itself can be reduced to sublinear in the code distance $d$~\cite{liyanage2024heliosv2}, falling below that of calculating the cluster gap~\cite{meister2024efficientsoftoutput}.
This challenge is further exacerbated in QEC codes with multiple logical degrees of freedom, as soft output must be computed for every combination of logical operators.

In this work, we address the computational bottleneck of soft-output calculation for cluster-based decoders in real-time and parallel-computing environments.
Our main contribution is the introduction of two complementary concepts for fast and reliable confidence estimation:
{\it extra-cluster growth} and {\it early stopping}.
These ideas work together to enable accurate soft-output evaluation, while substantially reducing the computational overhead compared to existing methods.

The first concept, extra-cluster growth, introduces a new paradigm for confidence estimation in cluster-based decoders.
Instead of computing the decoding result and its confidence through separate post-processing steps, we estimate the decoder confidence by performing a controlled, additional growth of clusters after the decoding has completed.
This allows decoding and confidence estimation to be carried out within a single cluster-growth framework.
As a result, the soft-output calculation can directly reuse the cluster growth module of the decoder, eliminating the need for a separate shortest-path computation and making the method highly compatible with FPGA-based implementations of UF decoders.
The second concept, early stopping, is a general strategy applicable to confidence estimation.
It is motivated by the observation that many practical applications---such as decoder switching and post-selection---do not require the precise value of a large soft output.
Instead, it is sufficient to determine whether the confidence is below a predefined threshold.
By terminating the calculation as soon as this condition is resolved, early stopping significantly reduces computational cost without sacrificing relevant information.

To first isolate the effect of early stopping, we apply it to the existing cluster-gap calculation based on Dijkstra’s algorithm~\cite{dijkstra1959anote}.
This leads to the {\it bounded cluster gap}, which terminates the shortest-path search once the distance is guaranteed to exceed the threshold.
Numerical simulations show a substantial reduction in computational cost in the low-error regime.
For example, at a physical error rate of $p=0.10\%$, the number of nodes visited during the calculation scales approximately as $O(d^{2.31})$, compared to $O(d^{2.88})$ for the original cluster gap.
At lower error rates, the reduction is even more pronounced, reaching nearly two orders of magnitude at $p=0.05\%$.
These results demonstrate that early stopping alone can significantly mitigate the computational overhead of soft-output calculation.

However, the bounded cluster gap still relies on a Dijkstra-based search, which constitutes a process separate from the decoding itself.
To address this issue, we also introduce the \textit{extra-cluster gap}, an alternative estimator that quantifies decoder confidence by performing a small, additional growth of the clusters.
A key advantage of this method is its high compatibility with existing hardware architectures of UF decoders, as it reuses the core cluster growth module.
Besides hardware compatibility, we theoretically prove that, despite its simplicity, this approach is guaranteed to identify every instance where the original cluster gap is below a predefined threshold.
This feature ensures that no low-confidence results are missed, making it a reliable tool for applications like decoder switching. Our numerical analyses confirm the practical efficacy of extra-cluster gaps in the decoder-switching framework; for a distance-25 surface code at a physical error rate of $p = 0.10\%$, the extra-cluster gap predicts a switching rate as low as approximately $4 \times 10^{-10}$, which is small enough to prevent the backlog problem in decoder switching.
Furthermore, the benefits of the extra-cluster gap become even more pronounced in complex QEC architectures with multiple logical qubits.
For a QEC system with $M$ non-equivalent boundaries, calculating the soft output for all pairs requires $O(M^2)$ computations for the complementary gap.
In contrast, our extra-cluster gap requires only a single computation, drastically reducing the overall complexity of soft-output calculations.

In conclusion, the bounded cluster gap serves as a reference that quantifies the benefit of early stopping in isolation, while the extra-cluster gap provides a fully integrated, hardware-friendly solution that combines early stopping with extra-cluster growth.
This combination enables fast, scalable soft-output calculation suitable for real-time decoding, decoder switching, and QEC architectures with multiple logical boundaries.

The remainder of this paper is organized as follows.
Section~\ref{sec:background} reviews the fundamentals of QEC and existing methods for soft-output calculation.
Section~\ref{sec:early-stopping} provides the technical details of our proposed methods, the bounded cluster gap and the extra-cluster gap.
In Section~\ref{sec:numerical} we present numerical simulations to evaluate the performance of our proposals.
In Section~\ref{sec:applications}, we discuss the practical applications of our findings, such as for decoder switching and in scenarios with multiple logical boundaries.
Finally, Section~\ref{sec:conclusion} summarizes our findings and outlines future prospects.

\begin{figure*}[htb]
    \includegraphics[scale=0.5]{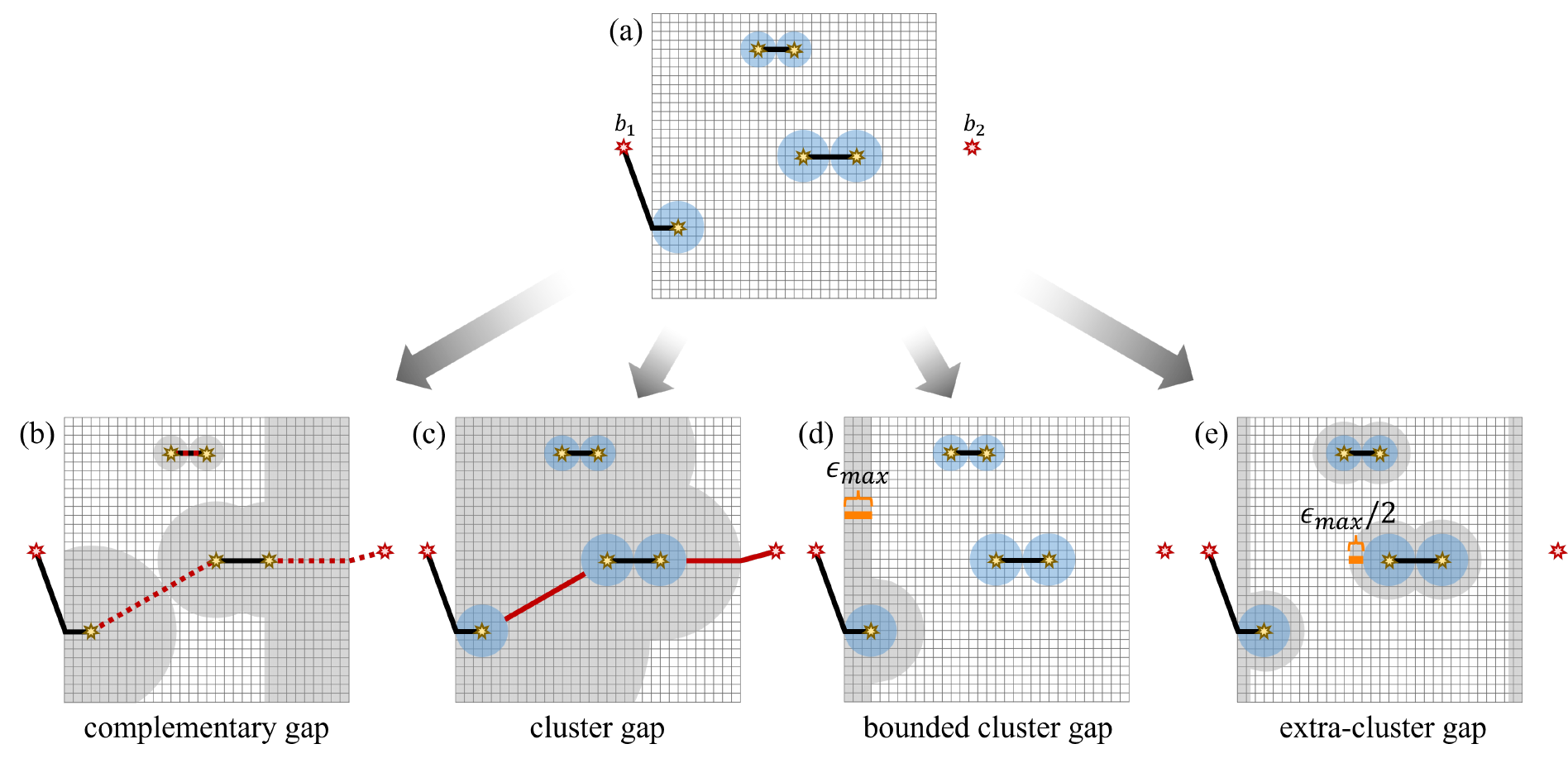}
    \caption{
        Schematic illustrations of different methods for calculating soft outputs.
        ({\bf a}) A cluster-based decoder forms multiple clusters (blue circles) around each error syndrome (yellow star) to determine an initial error-correction path (black line) for a given decoding problem.
        ({\bf b}) The \textit{complementary gap}~\cite{gidney2025yokedsurfacecodes} is the weight difference between the initial correction and the optimal correction for the complementary logical class (red dotted line).
        The gray area indicates the search space explored during the complementary decoding.
        ({\bf c}) The \textit{cluster gap}~\cite{meister2024efficientsoftoutput} is the shortest path distance between boundaries $b_1$ and $b_2$ (red line), calculated using Dijkstra's algorithm after all intra-cluster edge weights are set to zero.
        The gray area represents the region explored by the algorithm.
        ({\bf d}) The \textit{bounded cluster gap}, one of the methods we introduce, modifies the cluster gap by terminating the Dijkstra's search early.
        This process stops once the path distance is guaranteed to exceed a predefined threshold $\epsilon_\mathrm{max}$, which significantly reduces the search area (gray).
        ({\bf e}) The \textit{extra-cluster gap}, another method we propose, is determined by an additional growth of all clusters.
        The calculation is terminated if no single grown cluster connects boundaries $b_1$ and $b_2$ within a growth limit of $\epsilon_\mathrm{max}/2$.
        The gray area shows the region covered by this additional growth.
    }
    \label{fig:method-diagram}
\end{figure*}

\section{Background} \label{sec:background}
\subsection{Decoding Graph}

A Calderbank-Shor-Steane (CSS) code is defined by its check matrices $H_X$ and $H_Z$, whose rows represent the $X$- and $Z$-stabilizer generators, respectively.
In what follows, we neglect the correlation between $X$- and $Z$-errors and, for simplicity, focus on decoding the $Z$-errors.
Among the various types of CSS codes, this work will focus primarily on the surface code.

For the $X$-stabilizers of the surface code, we can construct a decoding graph where the nodes represent the detectors (i.e., the rows of $H_X$) and the edges represent physical errors that cause detector to flip.
Due to the geometric locality of the surface code, an edge connects either one or two detectors.
An edge incident to only one detector is called a half-edge and is treated as connecting to a virtual boundary node, denoted as $b_1$.

To calculate metrics such as the complementary gap or the cluster gap, this decoding graph is slightly modified~\cite{gidney2025yokedsurfacecodes,meister2024efficientsoftoutput}.
Specifically, the nodes connected to the original boundary node $b_1$ are partitioned.
The nodes corresponding to one side of the graph are rewired to a new, separate boundary node, denoted as $b_2$.
These two nodes are collectively referred to as the \textit{inequivalent boundaries}, denoted as $B=\{b_1, b_2\}$.
In this modified graph, a path connecting $b_1$ and $b_2$ constitutes a logical operator.
We denote this graph as $G=(V,E)$, where $V$ is the set of detectors plus $B$, and $E$ is the set of edges representing possible physical errors.

Assuming a circuit-level noise model, the maximum degree of any node in the decoding graph is 12. Each edge $e\in E$ is assigned a weight $w_e=\log{((1-p_e)/p_e)}$, where $p_e$ is the corresponding error probability.
When a distance-$d$ surface code is idled for $d$ rounds, the graph $G$ contains $O(d^3)$ detectors.
Meanwhile, the number of detectors adjacent to each boundary, $b_1$ and $b_2$, scales as $O(d^2)$.
In the following, we will assume these scalings for the number of detectors on $G$.

\subsection{Cluster-based Decoder}

Physical errors can flip the state of detectors, and the locations of these flips are referred to as detection events~\cite{gidney2021stim}.
A decoder's objective is to estimate the most likely logical error from a given set of detection events.
The decoding is deemed successful if the product of the true error and the applied correction is a trivial logical operator; otherwise, a logical error occurs.

Prominent examples of decoders include the minimum-weight perfect matching (MWPM) decoder~\cite{higgott2025sparseblossom,wu2023fusionblossom}, which is a most-likely error (MLE) decoder for codes with a matchable error graph, and its approximation, the UF decoder~\cite{delfosse2020lineartimemaximumlikelihood,wu2022interpretationunionfind}.
Both are cluster-based decoders that operate by growing clusters from detection events on the decoding graph.
In this approach, a cluster is formally defined as the union of balls of a certain radius centered at each detection event~\cite{meister2024efficientsoftoutput}.
This process continues until every detection event is paired within a cluster or matched to a boundary.
The growth mechanisms differ between the two: the MWPM decoder uses alternating trees and blossoms, whereas the standard UF decoder expands all active clusters uniformly.
Due to its algorithmic simplicity and amenability to parallelization, the UF decoder has been implemented on dedicated hardware like FPGAs, achieving much higher throughput than CPU-based implementations~\cite{liyanage2024heliosv2,valentino2025quekuf,heer2023achievingscalable,heer2023noveluf,ziad2024localclusteringdecoder,barber2025collisionclustering}.
For instance, Ref.~\cite{liyanage2024heliosv2} implemented a UF decoder on a Xilinx VCU129 FPGA, demonstrating that it can solve a $d=51$ decoding problem with a phenomenological noise model in 544~ns per round.
Notably, this hardware implementation achieves sublinear average-case time complexity, making it significantly more scalable than sequential software versions.

\subsection{Soft-Output Calculation}
\label{sec:soft-output}

A soft output is a metric that quantifies the reliability of a decoder's output.
A well-known example is the complementary gap, which can be efficiently computed by MLE decoders~\cite{gidney2025yokedsurfacecodes,bombin2024faulttolerantmsp,smith2024mitigatingerrorsinlq}.
It is defined as the weight difference between the two minimum-weight perfect matchings corresponding to different logical outcomes (see Figure~\ref{fig:method-diagram}~(b)).
The primary drawback of this method is the high computational cost of performing a second decoding to find the most-likely matching for the complementary logical class~\cite{jones2024improvedaccuracy}.
This second step is particularly time-consuming at low physical error rates, as it requires significant cluster growth to find a complementary matching.

To address this high cost, an efficient alternative known as the cluster gap has been proposed (termed following Ref.~\cite{toshio2025decoderswitching})~\cite{meister2024efficientsoftoutput}.
The calculation involves several steps, as shown in Figure~\ref{fig:method-diagram}~(c).
First, an initial decoding is performed using a cluster-based decoder.
The resulting set of final clusters is then used to define a new contracted graph, $G'$, where each cluster from the original graph $G$ is condensed into a single node.
Mathematically, $G'$ is the quotient graph of $G$ with respect to the partition defined by the clusters (see Definition~9 of Ref.~\cite{meister2024efficientsoftoutput}).
This contraction is equivalent to setting the weights of all edges within the clusters to zero.
Finally, the soft-output value is determined by calculating the shortest distance between inequivalent boundaries on $G'$ using Dijkstra's algorithm.

While this approach avoids the costly second decoding, the use of Dijkstra's algorithm still incurs a time complexity of $O(d^3 \log d)$~\cite{dijkstra1959anote}.
This is slightly worse than the complexity of the UF decoder, which is nearly linear at $O(d^3 \alpha(d^3))$, where $\alpha$ is the inverse Ackermann function~\cite{tarjan1975efficiency,delfosse2020lineartimemaximumlikelihood}.
This performance gap becomes a more significant bottleneck in parallel computing environments.
As previously mentioned, the complexity of a parallelized UF decoder scales sublinearly with $d$, making it substantially more efficient than the cluster gap calculation.

\section{Early Stopping and Extra-Cluster Growth} \label{sec:early-stopping}

In this section, to accelerate soft-output calculation, we introduce two complementary approaches: early stopping and extra-cluster growth.
These strategies leverage a predefined soft-output threshold, $\epsilon_\mathrm{max}$, whose value is determined by the criteria for post-selection~\cite{smith2024mitigatingerrorsinlq,bombin2024faulttolerantmsp,gidney2024msc,hirano2025efficientmsc,dinca2025errormitigation,zhou2025errormitigation,sunami2025entanglementboosting} or some switching methods~\cite{jones2024improvedaccuracy,toshio2025decoderswitching}.

\subsection{Bounded Cluster Gap}\label{sec:bounded-cluster-gap}

To first isolate and quantify the standalone benefit of early stopping, we apply this strategy to the existing cluster gap calculation.
This procedure gives rise to what we term the bounded cluster gap, a method that reduces the time complexity of the cluster gap calculation.
This approach leverages the operational principle of Dijkstra's algorithm, which systematically explores graph nodes in increasing order of distance from a source using a priority queue.
Consequently, the search can be terminated as soon as the distance of the node extracted from the priority queue exceeds the threshold $\epsilon_\mathrm{max}$.
This modified version of the algorithm is known as bounded Dijkstra's algorithm, and its performance has been previously analyzed in detail~\cite{bemten2019boundeddijkstra}.
In this work, we analyze the performance of the bounded cluster gap when applied to the decoding graph of surface codes.

Figure~\ref{fig:method-diagram}~(c) and (d) illustrate the search spaces for the original cluster gap and the bounded cluster gap, respectively.
The gray area in Figure~\ref{fig:method-diagram}~(d) shows that the bounded cluster gap confines the search space to a narrower region than the original cluster gap in Figure~\ref{fig:method-diagram}~(c).
If we assume $\epsilon_\mathrm{max}$ is a constant independent of the code distance $d$, the search is limited to a radius of approximately $\epsilon_\mathrm{max}$ from the boundary node $b_1$.
At a low physical error probability $p$, large error clusters are unlikely to form near $b_1$.
Therefore, the search space is confined to the vicinity of the boundary, containing a number of nodes on the order of $O(d^2)$.
Since Dijkstra's algorithm has a time complexity of $O(N\log N)$ for a graph with $N$ nodes and $O(N)$ edges~\cite{dijkstra1959anote}, the average time complexity for the bounded cluster gap in this low-error regime is
\begin{align}
O(d^2\log(d^2))=O(d^2\log d), \label{eq:bounded-cluster-gap-time-complexity}
\end{align}
where we used the fact that $N=O(d^3)$ in typical decoding problems for distance-$d$ surface codes.

Conversely, at a high physical error probability $p$, the likelihood of a large cluster forming adjacent to $b_1$ increases.
Since edge weights are zero within such a cluster, the search can traverse a large area while remaining within the distance limit $\epsilon_\mathrm{max}$.
In the worst-case scenario, where the cluster spans the entire graph $G$, the time complexity reverts to $O(d^3\log d)$, matching that of the original cluster gap.
We will later present numerical experiments to demonstrate the relationship between $p$ and the effective search space.
While a more efficient shortest-path algorithm was recently discovered~\cite{duan2025fasterdijkstra}, its improvement changes the complexity's logarithmic factor from $\log(\cdot)$ to $\log^{2/3}(\cdot)$, which does not significantly alter our conclusions.

Our discussion thus far has focused on sequential computation.
For parallel computation, alternative shortest-path algorithms exist, such as the $\Delta$-stepping algorithm~\cite{meyer2003deltastepping} and its derivatives~\cite{dong2021efficientstepping,vedadi2025hybstepping}.
The time complexity of the $\Delta$-stepping algorithm is $O(L\log N)$ for a graph with $N$ nodes, $O(N)$ edges, constant maximum node degree, and a shortest-path length of $L$~\cite{meyer2003deltastepping}.
In the low-error regime, the path length $L$ is typically small and bounded by $\epsilon_\mathrm{max}$, reducing the average time complexity to $O(\log d)$.
However, similar to the sequential case, for large $p$, $L$ can be on the order of $d$, leading to a time complexity of $O(d\log d)$.

\subsection{Extra-Cluster Gap}\label{sec:extra-cluster-gap}

In this section, we propose an alternative type of soft output called the extra-cluster gap.
This is designed for efficient soft-output calculation on dedicated hardware, such as FPGAs, based on the existing cluster-growth modules of cluster-based decoders.

The basic idea behind the extra-cluster gap stems from reinterpreting the cluster gap within the framework of ``extra-cluster growth."
As explained in Section~\ref{sec:soft-output}, the cluster gap quantifies the decoder's confidence by measuring the shortest distance between the non-equivalent boundaries on $G$ after removing the weights on clusters.
Our key insight is that an equivalent quantity can be reconstructed by introducing the extra-cluster growth process as follows: 
First, cluster-based decoding is performed to solve a specific decoding task, thereby forming corresponding clusters on the decoding graph. Next, the resulting clusters are grown additionally until the non-equivalent boundaries become connected via these clusters. Finally, the amount of growth required for this connection is quantified, which yields a quantity equivalent to the cluster gap. In fact, we theoretically and numerically confirm the equivalence or relationship between these approaches in the subsequent discussions.
Importantly, this new insight offers an opportunity to design soft-output calculations more flexibly. In this work, by setting a cutoff for additional growth, we formulate the extra-cluster gap as a novel soft output that efficiently approximates the cluster gap.

In what follows, we present two variants of the extra-cluster gap.
The first, simplified one relies solely on the additional growth procedure and is referred to as the extra-cluster gap without cluster graph (w/o CG).
The second, more precise one constructs a \textit{cluster graph} from the inter-cluster distances to yield a result identical to the original cluster gap, which we call the extra-cluster gap with cluster graph (w/ CG).

\subsubsection{Extra-Cluster Gap without Cluster Graph (w/o CG)}
First, we describe the simpler approach, the extra-cluster gap w/o CG, which is detailed in Algorithm~\ref{alg:extra-cluster-gap-wo-cluster-graph}.
This approach additionally grows all clusters by a radius below $\epsilon_\mathrm{max}/2$.
During this process, the decoder checks if a single cluster that connects the boundaries $b_1$ and $b_2$ is formed.
If such a connection occurs, the algorithm returns the minimum growth amount required for the connection as the soft-output value.
If no connection is formed within the growth limit, it signifies that no soft-output value was found in that range.

\begin{algorithm}[t]
    \caption{Extra-cluster gap without the cluster graph}
    \label{alg:extra-cluster-gap-wo-cluster-graph}
    \begin{algorithmic}[1]
        \Require A graph $G$ with clusters and boundaries $b_1$, $b_2$.
        \Ensure A soft-output value if boundaries are connected by additional growth up to $\epsilon_\mathrm{max}$, otherwise null.
        \While {$\epsilon \leq \epsilon_\mathrm{max}$}
            \State Increase the radii of all clusters and boundary nodes by $\delta\epsilon/2$ ($\delta\epsilon$: arbitrary).
            \State Merge colliding clusters into a single cluster using the Union operation.
            \State $\epsilon\gets \epsilon + \delta\epsilon$
            \If {a single cluster connects boundaries $b_1$ and $b_2$}
                \State \Return $\epsilon$
            \EndIf
        \EndWhile
        \State \Return null
    \end{algorithmic}
\end{algorithm}

To analyze this approach theoretically, here we define the cluster gap and the extra-cluster gap w/o CG formally as follows:

\begin{definition}[Cluster Gap: $g_\mathrm{c}$] \label{def:g_c}
    Let $P_\mathrm{c}$ be the shortest path connecting the boundary nodes $b_1$ and $b_2$ in $G'$.
    The cluster gap, $g_\mathrm{c}$, is defined as the total weight of this path.

\end{definition}

\begin{definition}[Extra-Cluster Gap w/o CG: $g_\mathrm{ec}$] \label{def:g_ec}
    The extra-cluster gap without a cluster graph, $g_\mathrm{ec}$, is defined based on a search over a growth parameter $\epsilon$.
    For a given $\epsilon$, let $G'_\epsilon$ be the subgraph of $G'$ that includes only edges with weights less than or equal to $\epsilon$ from each cluster or boundary node.
    We define $g_\mathrm{ec}$ as the minimum value of $\epsilon\in[0,\epsilon_\mathrm{max}]$ for which a path exists between the boundary nodes $b_1$ and $b_2$ in $G'_\epsilon$.
    If no such path is found for any $\epsilon\leq\epsilon_\mathrm{max}$, $g_\mathrm{ec}$ is undefined.

\end{definition}
Here we note that Definition~\ref{def:g_c} is identical to $\phi(\mathcal{C})$ in Definition~9 of Ref.~\cite{meister2024efficientsoftoutput}.

To facilitate the proof, we introduce an additional definition related to the path $P_\mathrm{c}$.

\begin{definition}[Maximum Inter-Cluster Edge Weight on $P_\mathrm{c}$: $w_\mathrm{max}(P_\mathrm{c})$]
    The path $P_\mathrm{c}$ connects the boundaries $b_1$ and $b_2$ by traversing a sequence of zero or more clusters.
    We define $w_\mathrm{max}(P_\mathrm{c})$ as the maximum of the total weights of all edges connecting any two consecutive elements (clusters or boundaries) along the path $P_\mathrm{c}$.
\end{definition}

The relationship between $g_\mathrm{c}$ and $g_\mathrm{ec}$ is summarized by the following theorems.

\begin{theorem} \label{theorem:extra-cluster-gap-wo-cg-general}
    For any threshold $\epsilon_\mathrm{max}\geq 0$, one of two conditions must hold:
    \begin{itemize}
        \item $g_\mathrm{ec}$ is defined, and it satisfies the inequality $g_\mathrm{ec}\leq g_\mathrm{c}$.
        \item $g_\mathrm{ec}$ is undefined.
    \end{itemize}

\end{theorem}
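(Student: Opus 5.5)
The plan is to reduce Theorem~\ref{theorem:extra-cluster-gap-wo-cg-general} to a single implication. Either $g_\mathrm{ec}$ is undefined, or it is defined. So it suffices to show that whenever $g_\mathrm{ec}$ is defined we have $g_\mathrm{ec}\leq g_\mathrm{c}$. The key observation is that $g_\mathrm{ec}$ is the smallest growth $\epsilon$ at which $b_1$ and $b_2$ become connected in $G'_\epsilon$. Any explicit value of $\epsilon$ at which a connecting path exists therefore bounds $g_\mathrm{ec}$ from above, provided that value lies in $[0,\epsilon_\mathrm{max}]$. The natural witness is built from the shortest path $P_\mathrm{c}$ of Definition~\ref{def:g_c}, using the quantity $w_\mathrm{max}(P_\mathrm{c})$.

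\textbf{Step 1.} Decompose $P_\mathrm{c}$ in $G'$ into consecutive segments $s_0,s_1,\dots,s_k$. Each segment joins two consecutive elements (clusters, i.e.\ contracted nodes, or boundary nodes) and passes through no other cluster. Since intra-cluster edges have weight zero in $G'$, the segment weights sum to the total weight:
\begin{equation}
g_\mathrm{c}=\sum_i w(s_i).
\end{equation}
Hence
\begin{equation}
w_\mathrm{max}(P_\mathrm{c})=\max_i w(s_i)\leq g_\mathrm{c},
\end{equation}
because all weights are nonnegative.

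\textbf{Step 2.} Show that $b_1$ and $b_2$ are connected in $G'_\epsilon$ for $\epsilon=w_\mathrm{max}(P_\mathrm{c})$. Here I read Definition~\ref{def:g_ec} as the growth picture of Algorithm~\ref{alg:extra-cluster-gap-wo-cluster-graph}: each cluster and boundary grows by radius $\epsilon/2$, and two elements merge once the distance between them is at most $\epsilon$. Take a segment $s_i$ between elements $A$ and $B$ with $w(s_i)\leq\epsilon$. Every vertex of $s_i$ is within distance $\epsilon/2$ of $A$ or of $B$ along $s_i$, so the two grown balls jointly cover the segment and the segment's edges belong to $G'_\epsilon$. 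Chaining these covered segments gives a $b_1$--$b_2$ path in $G'_\epsilon$.

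\textbf{Step 3.} Conclude by cases.
\begin{itemize}
\item If $w_\mathrm{max}(P_\mathrm{c})\leq\epsilon_\mathrm{max}$, then $g_\mathrm{ec}$ is defined and $g_\mathrm{ec}\leq w_\mathrm{max}(P_\mathrm{c})\leq g_\mathrm{c}$.
\item If $w_\mathrm{max}(P_\mathrm{c})>\epsilon_\mathrm{max}$, then either $g_\mathrm{ec}$ is undefined, or it is defined and satisfies $g_\mathrm{ec}\leq\epsilon_\mathrm{max}<w_\mathrm{max}(P_\mathrm{c})\leq g_\mathrm{c}$.
\end{itemize}
In every case one of the two alternatives in the statement holds. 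The degenerate case $k=0$, where $P_\mathrm{c}$ passes through a single cluster touching both boundaries, gives $g_\mathrm{c}=0$. Then $G'_0$ already contains a connecting path, so $g_\mathrm{ec}=0$.

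\textbf{Main obstacle.} The hard part is Step 2: pinning down exactly what ``edges with weights less than or equal to $\epsilon$ from each cluster or boundary node'' means, and checking the halving convention in Algorithm~\ref{alg:extra-cluster-gap-wo-cluster-graph}. One must confirm that a segment of weight $w$ is absorbed at growth parameter $w$ and not at $2w$; a half-edge being partially covered from both sides must count as connected. If the convention instead absorbed a segment only at parameter $2w$, the witness would be $2w_\mathrm{max}(P_\mathrm{c})$. The inequality would then need $2w_\mathrm{max}(P_\mathrm{c})\leq g_\mathrm{c}$, which fails for a one-segment path. So I would fix the ball-covering interpretation (distance $\leq\epsilon$ measured from either endpoint, with both endpoints grown by $\epsilon/2$) explicitly before writing the chaining argument.
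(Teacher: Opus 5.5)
Your proposal is correct and is essentially the paper's argument: take $\epsilon=w_\mathrm{max}(P_\mathrm{c})$ as a witness, so $g_\mathrm{ec}\leq w_\mathrm{max}(P_\mathrm{c})\leq g_\mathrm{c}$, because $g_\mathrm{c}$ is the sum of the nonnegative connection weights along $P_\mathrm{c}$. Your case split on $w_\mathrm{max}(P_\mathrm{c})$ versus $\epsilon_\mathrm{max}$ and your ball-covering check of the $\epsilon/2$ convention make precise two points the paper only asserts; the one slip is cosmetic: under your indexing $k=0$ means a single direct $b_1$--$b_2$ segment, whereas the $g_\mathrm{c}=0$ case you describe is the one with no segments at all.
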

\begin{proof}
    Consider the shortest path $P_\mathrm{c}$, which has a total weight of $g_\mathrm{c}$. 
    If we set the growth parameter $\epsilon$ to be $w_\mathrm{max}(P_\mathrm{c})$, all edges of the path $P_\mathrm{c}$ are included in the subgraph $G'_\epsilon$.
    This ensures that a path connecting $b_1$ and $b_2$ exists in $G'_\epsilon$ for $\epsilon = w_\mathrm{max}(P_\mathrm{c})$.
    Since $g_\mathrm{ec}$ is the minimum such $\epsilon$ for which a path exists, we have $g_\mathrm{ec}\leq w_\mathrm{max}(P_\mathrm{c})$.

    Furthermore, the maximum weight of a single connection between consecutive elements on a path, $w_\mathrm{max}(P_\mathrm{c})$, cannot exceed the total weight of the entire path, $g_\mathrm{c}$.
    The total weight $g_\mathrm{c}$ is the sum of all such connection weights.
    This gives the inequality $w_\mathrm{max}(P_\mathrm{c}) \leq g_\mathrm{c}$.

    Combining these results, we find that $g_\mathrm{ec} \leq g_\mathrm{c}$ if $g_\mathrm{ec}$ is defined.
    If no path satisfies the condition for any $\epsilon \leq \epsilon_\mathrm{max}$, then $g_\mathrm{ec}$ is undefined.

\end{proof}

\begin{theorem} \label{theorem:extra-cluster-gap-wo-cg-special}
    If the cluster gap $g_{\mathrm{c}}$ is less than or equal to the threshold $\epsilon_\text{max}$, then $g_{\mathrm{ec}}$ is guaranteed to be defined and satisfies $g_{\mathrm{ec}} \leq g_{\mathrm{c}}$.

\end{theorem}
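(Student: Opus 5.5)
The plan is to reuse the argument from the proof of Theorem~\ref{theorem:extra-cluster-gap-wo-cg-general}, noting that the hypothesis $g_\mathrm{c}\leq\epsilon_\mathrm{max}$ now places the witnessing value of $\epsilon$ inside the admissible search range $[0,\epsilon_\mathrm{max}]$. Consider the shortest path $P_\mathrm{c}$ in $G'$ between $b_1$ and $b_2$; its total weight is $g_\mathrm{c}$. If no such path exists, then $g_\mathrm{c}=\infty$ and the hypothesis cannot hold. So $P_\mathrm{c}$ exists, and $w_\mathrm{max}(P_\mathrm{c})$ is well defined.

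First, I will recall the two facts established in the proof of Theorem~\ref{theorem:extra-cluster-gap-wo-cg-general}:
\begin{itemize}
\item at $\epsilon=w_\mathrm{max}(P_\mathrm{c})$, every inter-cluster connection on $P_\mathrm{c}$ is contained in $G'_\epsilon$, so $b_1$ and $b_2$ are connected in $G'_\epsilon$;
\item $w_\mathrm{max}(P_\mathrm{c})\leq g_\mathrm{c}$, because the weights of the connections are nonnegative and sum to $g_\mathrm{c}$.
\end{itemize}
Chaining these with the hypothesis gives $0\leq w_\mathrm{max}(P_\mathrm{c})\leq g_\mathrm{c}\leq\epsilon_\mathrm{max}$. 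Hence the set of $\epsilon\in[0,\epsilon_\mathrm{max}]$ for which $b_1$ and $b_2$ are connected in $G'_\epsilon$ is nonempty, and by Definition~\ref{def:g_ec} $g_\mathrm{ec}$ is defined.

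Second, because $g_\mathrm{ec}$ is defined, Theorem~\ref{theorem:extra-cluster-gap-wo-cg-general} excludes its second alternative. Therefore $g_\mathrm{ec}\leq g_\mathrm{c}$, and more sharply $g_\mathrm{ec}\leq w_\mathrm{max}(P_\mathrm{c})$.

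The main obstacle is checking that the minimum in Definition~\ref{def:g_ec} is actually attained, rather than being only an infimum. The argument is that $G'$ is finite, so the edge set of $G'_\epsilon$ changes only at finitely many values of $\epsilon$. Moreover, the inclusion condition ``weight $\leq\epsilon$'' is closed on the right, so connectivity is monotone in $\epsilon$ and right-continuous. The set of admissible $\epsilon$ is therefore a closed interval $[\epsilon^*,\epsilon_\mathrm{max}]$, and its minimum $\epsilon^*$ exists. A secondary subtlety is the factor of $2$ in Algorithm~\ref{alg:extra-cluster-gap-wo-cluster-graph}, where two clusters each grow by $\epsilon/2$. I will note that this factor is already absorbed into Definition~\ref{def:g_ec}, because an edge is included once its weight is at most $\epsilon$. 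The statement concerns only $g_\mathrm{ec}$ as defined there, so no further argument is needed.
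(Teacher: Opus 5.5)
Your proposal is correct and follows essentially the same argument as the paper: the chain $w_\mathrm{max}(P_\mathrm{c})\leq g_\mathrm{c}\leq\epsilon_\mathrm{max}$ places $P_\mathrm{c}$ inside $G'_{\epsilon_\mathrm{max}}$, so $g_\mathrm{ec}$ is defined, and Theorem~\ref{theorem:extra-cluster-gap-wo-cg-general} then gives $g_\mathrm{ec}\leq g_\mathrm{c}$. Your additional remarks are sound extra rigor that the paper leaves implicit: the finiteness of $g_\mathrm{c}$, and the fact that the minimum in Definition~\ref{def:g_ec} is attained because inclusion is by a non-strict inequality and there are only finitely many edge weights.
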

\begin{proof}
    From Theorem~\ref{theorem:extra-cluster-gap-wo-cg-general}, we know that $g_{\mathrm{ec}} \leq g_{\mathrm{c}}$ whenever $g_{\mathrm{ec}}$ is defined.
    We therefore only need to show that the condition $g_{\mathrm{c}}\leq \epsilon_\text{max}$ guarantees that $g_{\mathrm{ec}}$ is defined.

    As shown in the proof of Theorem~\ref{theorem:extra-cluster-gap-wo-cg-general}, $w_\mathrm{max}(P_{\mathrm{c}})$ is less than or equal to $g_\mathrm{c}$.
    The condition $g_{\mathrm{c}}\leq \epsilon_\mathrm{max}$ therefore implies $w_\mathrm{max}(P_{\mathrm{c}}) \leq \epsilon_\mathrm{max}$.

    This means that the path $P_{\mathrm{c}}$ exists entirely within the subgraph used to search for $g_{\mathrm{ec}}$ up to the threshold $\epsilon_\mathrm{max}$.
    The existence of such a path ensures that $g_{\mathrm{ec}}$ is defined.
    Thus, the conclusion from Theorem~\ref{theorem:extra-cluster-gap-wo-cg-general} applies.

\end{proof}

Theorem~\ref{theorem:extra-cluster-gap-wo-cg-special} guarantees that the extra-cluster gap w/o CG method can identify every instance where the cluster gap $g_{\text{c}}$ is below a given threshold $\epsilon_\mathrm{max}$.
This property is valuable for applications like decoder switching~\cite{toshio2025decoderswitching} or methods which rely on flagging low-confidence results for further processing~\cite{smith2024mitigatingerrorsinlq,bombin2024faulttolerantmsp,jones2024improvedaccuracy,meister2024efficientsoftoutput,gidney2024msc,hirano2025efficientmsc,dinca2025errormitigation,zhou2025errormitigation,sunami2025entanglementboosting}.
For such methods, it is crucial not to miss any samples below the threshold, making the extra-cluster gap w/o CG a suitable candidate.

\subsubsection{Extra-Cluster Gap with Cluster Graph (w/ CG)}
A limitation of the w/o CG method is that it can yield a value $g_{\text{ec}}\leq\epsilon_\mathrm{max}$ even when the cluster gap is larger, $g_{\text{c}}>\epsilon_\mathrm{max}$.
To address this inaccuracy, we introduce the extra-cluster gap w/ CG.
This method first performs the same additional growth step to detect a connection.
If a connection is found, it then constructs a cluster graph to calculate the precise distance, as shown in Figure~\ref{fig:extra-cluster-gap-with-cg} and detailed in Algorithms~\ref{alg:extra-cluster-gap-w-cluster-graph} and \ref{alg:cluster-graph}.

The output of this algorithm, which we denote as $g_{\mathrm{eccg}}$, is the extra-cluster gap w/ CG, calculated only if the initial growth check is positive.

\begin{definition}[Extra-Cluster Gap w/ CG: $g_{\mathrm{eccg}}$] \label{def:g_eccg}
    The extra-cluster gap with a cluster graph, $g_{\mathrm{eccg}}$, is conditionally calculated.

    If a path between $b_1$ and $b_2$ exists within $G'_{\epsilon_\mathrm{max}}$, then $g_{\mathrm{eccg}}$ is defined as the shortest path distance between $b_1$ and $b_2$ in that subgraph.
    Otherwise, $g_{\mathrm{eccg}}$ is undefined.

\end{definition}

The properties of this method are formalized in the following theorems.

\begin{theorem} \label{theorem:extra-cluster-gap-w-cg-general}
    For any threshold $\epsilon_\mathrm{max} \geq 0$, one of two conditions must hold:
    \begin{itemize}
        \item $g_{\mathrm{eccg}}$ is defined, and it satisfies the inequality $g_{\mathrm{c}}\leq g_{\mathrm{eccg}}$.
        \item $g_{\mathrm{eccg}}$ is undefined.
    \end{itemize}

\end{theorem}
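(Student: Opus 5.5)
The plan is to argue by a simple subgraph-monotonicity of shortest-path distances, splitting on whether $g_{\mathrm{eccg}}$ is defined.

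If $g_{\mathrm{eccg}}$ is undefined, the second alternative of the statement holds and there is nothing to prove. So assume it is defined. By Definition~\ref{def:g_eccg}, this means a path between $b_1$ and $b_2$ exists in $G'_{\epsilon_\mathrm{max}}$, and $g_{\mathrm{eccg}}$ is the length of a shortest such path. Call this path $P_{\mathrm{eccg}}$. Since only finitely many paths are relevant (simple paths in a finite graph with nonnegative weights), this minimum is attained.

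The key observation is that $G'_{\epsilon_\mathrm{max}}$ is, by Definition~\ref{def:g_ec}, a subgraph of $G'$. It has the same node set (clusters and the boundary nodes $b_1,b_2$) and a subset of the edges. Crucially, it keeps the same edge weights: edges are only removed, never reweighted. Hence $P_{\mathrm{eccg}}$ is also a path from $b_1$ to $b_2$ in $G'$, and its total weight computed in $G'$ equals $g_{\mathrm{eccg}}$.

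By Definition~\ref{def:g_c}, $g_\mathrm{c}$ is the weight of the shortest $b_1$--$b_2$ path $P_\mathrm{c}$ in $G'$, i.e., the minimum over all $b_1$--$b_2$ paths in $G'$. Since $P_{\mathrm{eccg}}$ is one such path, $g_\mathrm{c}\leq g_{\mathrm{eccg}}$, which is the first alternative.

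The only step requiring care is the claim that weights are preserved under passage to the subgraph. Definition~\ref{def:g_ec} describes $G'_\epsilon$ via ``edges with weights less than or equal to $\epsilon$ from each cluster or boundary node''. In the extra-cluster-growth picture, this could be read as a path of several original edges of $G$ between two clusters, with total weight at most $\epsilon$. Under either reading, I will note that each retained connection carries its original weight in $G'$ (a single edge, or a path of edges of $G'$ whose weights sum to the connection length). Consequently, any $b_1$--$b_2$ walk in $G'_{\epsilon_\mathrm{max}}$ expands to a walk in $G'$ of equal total weight, and the minimality argument goes through unchanged.
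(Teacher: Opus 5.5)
Your proof is correct and is the same subgraph-monotonicity argument the paper uses: $G'_{\epsilon_\mathrm{max}}$ keeps a subset of the edges of $G'$ with unchanged non-negative weights, so the shortest $b_1$--$b_2$ path in $G'_{\epsilon_\mathrm{max}}$ is also a path in $G'$, which gives $g_\mathrm{c}\leq g_{\mathrm{eccg}}$ whenever $g_{\mathrm{eccg}}$ is defined. Your extra check that the argument also works when a retained ``edge'' of $G'_\epsilon$ is read as a multi-edge connection is a small addition that the paper leaves implicit.
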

\begin{proof}
    The value $g_{\mathrm{eccg}}$ is defined as the shortest path distance in the subgraph $G'_{\epsilon_\mathrm{max}}$, while $g_{\mathrm{c}}$ is the shortest path distance in the full graph $G'$.
    The subgraph $G'_{\epsilon_\mathrm{max}}$ contains a subset of the edges available in $G'$.
    Assuming non-negative edge weights, the shortest path distance in a larger graph cannot be greater than the shortest path distance in its subgraph.

    Therefore, the shortest path in $G'$ must be less than or equal to the shortest path in $G'_{\epsilon_\mathrm{max}}$, which gives the inequality $g_{\mathrm{c}}\leq g_{\mathrm{eccg}}$.
    This holds whenever $g_{\mathrm{eccg}}$ is defined; otherwise, the second condition is met.

\end{proof}

\begin{figure}[t]
    \centering
    \includegraphics[scale=0.5]{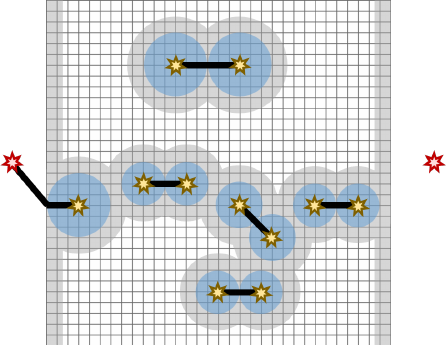}
    \includegraphics[scale=0.5]{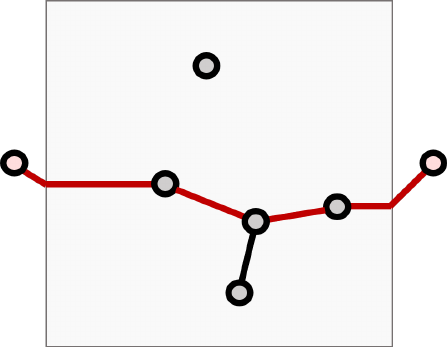}
    \caption{
        ({\bf Left}) A connection between the two boundaries is formed and detected via the extra-cluster gap method.
        ({\bf Right}) A cluster graph is constructed using the distances between the colliding clusters to compute the precise shortest path.
    }
    \label{fig:extra-cluster-gap-with-cg}
\end{figure}

\begin{algorithm}[t]
    \caption{Extra-cluster gap with the cluster graph}
    \label{alg:extra-cluster-gap-w-cluster-graph}
    \begin{algorithmic}[1]
        \Require A graph $G$ with clusters and boundaries $b_1$, $b_2$.
        \Ensure A soft-output value if boundaries are connected by additional growth up to $\epsilon_\mathrm{max}$, otherwise null.
        \While {$\epsilon \leq \epsilon_\mathrm{max}$}
            \State Increase the radii of all clusters and boundary nodes by $\delta\epsilon/2$ ($\delta\epsilon$: arbitrary).
            \State Merge colliding clusters using the Union operation and record the current $\epsilon$ as the collision distance.
            \State $\epsilon\gets \epsilon + \delta\epsilon$
        \EndWhile
        \If {a single cluster connects boundaries $b_1$ and $b_2$}
            \State \Return CLUSTER\_GRAPH()
        \Else
            \State \Return null
        \EndIf
    \end{algorithmic}
\end{algorithm}

\begin{algorithm}[t]
    \caption{Calculation of the distance using a cluster graph}
    \label{alg:cluster-graph}
    \begin{algorithmic}[1]
        \Function {cluster\_graph}{}
            \State Construct a graph (the cluster graph) where nodes are the clusters and edge weights are the collision distances from the additional growth.
            \State \Return The shortest path distance between the boundary nodes on the cluster graph, found using Dijkstra's algorithm.
        \EndFunction
    \end{algorithmic}
\end{algorithm}

\begin{theorem} \label{theorem:extra-cluster-gap-w-cg-special}
    If the cluster gap $g_{\mathrm{c}}$ is less than or equal to the threshold $\epsilon_\text{max}$, then $g_{\mathrm{eccg}}$ is defined and is exactly equal to $g_{\mathrm{c}}$.

\end{theorem}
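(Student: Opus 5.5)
We prove the statement by a two-sided comparison. Theorem~\ref{theorem:extra-cluster-gap-w-cg-general} already gives one direction, and the argument of Theorem~\ref{theorem:extra-cluster-gap-wo-cg-special} gives the other. Throughout, $g_{\mathrm{eccg}}$ is read as the shortest-path distance between $b_1$ and $b_2$ in $G'_{\epsilon_\mathrm{max}}$, as in Definition~\ref{def:g_eccg}. We assume, as the paper does implicitly, that all edge weights are non-negative, so that $w_\mathrm{max}(P_\mathrm{c}) \leq g_\mathrm{c}$.

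\emph{Step 1: $g_{\mathrm{eccg}}$ is defined.} Take the shortest path $P_\mathrm{c}$ in $G'$, whose total weight is $g_\mathrm{c}$. As in the proof of Theorem~\ref{theorem:extra-cluster-gap-wo-cg-general}, every inter-cluster connection on $P_\mathrm{c}$ has weight at most $w_\mathrm{max}(P_\mathrm{c}) \leq g_\mathrm{c} \leq \epsilon_\mathrm{max}$. Hence every edge of $P_\mathrm{c}$ survives in $G'_{\epsilon_\mathrm{max}}$. Intra-cluster edges need no check, since they are contracted in $G'$. So $b_1$ and $b_2$ are connected in $G'_{\epsilon_\mathrm{max}}$, and $g_{\mathrm{eccg}}$ is defined.

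\emph{Step 2: equality.} Since $P_\mathrm{c}$ is a path in $G'_{\epsilon_\mathrm{max}}$, the shortest distance there is at most the weight of $P_\mathrm{c}$, giving $g_{\mathrm{eccg}} \leq g_\mathrm{c}$. Theorem~\ref{theorem:extra-cluster-gap-w-cg-general} supplies $g_\mathrm{c} \leq g_{\mathrm{eccg}}$. Together these give $g_{\mathrm{eccg}} = g_\mathrm{c}$.

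\emph{Main obstacle: matching the definition to Algorithms~\ref{alg:extra-cluster-gap-w-cluster-graph} and~\ref{alg:cluster-graph}.} The algorithm does not run Dijkstra on $G'_{\epsilon_\mathrm{max}}$ directly. It runs Dijkstra on the cluster graph, whose edges are weighted by recorded collision distances. To close this gap, I would check three things:
\begin{itemize}
\item Growing each cluster by radius $\epsilon/2$ makes two clusters collide at $\epsilon$ equal to their minimum inter-cluster distance in $G'$.
\item The ``edges with weight at most $\epsilon$ from each cluster'' in $G'_\epsilon$ correspond exactly to the inter-cluster connections of total weight at most $\epsilon$.
\item A collision recorded between already-merged super-clusters still carries the true pairwise distance. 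Because of union merging this is the delicate point. I would handle it by recording the collision weight between the original clusters $G'$-nodes rather than between merged roots.
\end{itemize}
Given these, the cluster graph restricted to collisions with weight at most $\epsilon_\mathrm{max}$ is the metric closure of $G'_{\epsilon_\mathrm{max}}$ on cluster and boundary nodes. Shortest-path distances between $b_1$ and $b_2$ are preserved under this closure, so the algorithm's output coincides with $g_{\mathrm{eccg}}$ and hence with $g_\mathrm{c}$.
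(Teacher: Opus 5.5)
Your Steps 1--2 are exactly the paper's proof: $w_\mathrm{max}(P_\mathrm{c})\le g_\mathrm{c}\le\epsilon_\mathrm{max}$ puts $P_\mathrm{c}$ inside $G'_{\epsilon_\mathrm{max}}$, which gives both definedness and $g_{\mathrm{eccg}}\le g_\mathrm{c}$, and Theorem~\ref{theorem:extra-cluster-gap-w-cg-general} gives $g_\mathrm{c}\le g_{\mathrm{eccg}}$. The algorithm-matching section goes beyond what the paper proves, since the paper only treats $g_{\mathrm{eccg}}$ as given in Definition~\ref{def:g_eccg}; if you keep it, drop the claim that the cluster graph is the metric closure of $G'_{\epsilon_\mathrm{max}}$ (this fails in general, because paths in $G'_{\epsilon_\mathrm{max}}$ can splice together pieces of different sub-threshold connections) and instead rerun the same sandwich argument directly on the cluster graph, where each recorded edge weight is at least the corresponding $G'$-distance and, with your per-pair recording, every segment of $P_\mathrm{c}$ yields a recorded edge of no greater weight.
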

\begin{proof}
    From Theorem~\ref{theorem:extra-cluster-gap-w-cg-general}, we have the relation $g_{\mathrm{c}}\leq g_{\mathrm{eccg}}$ when $g_{\mathrm{eccg}}$ is defined.
    To prove equality, we must show the reverse inequality, $g_{\mathrm{eccg}} \leq g_{\mathrm{c}}$, under the given condition.

    As shown in the proof of Theorem~\ref{theorem:extra-cluster-gap-wo-cg-special}, $w_\mathrm{max}(P_{\mathrm{c}}) \leq \epsilon_\mathrm{max}$.
    This implies that the entire path $P_{\mathrm{c}}$ is contained within the subgraph $G'_{\epsilon_\mathrm{max}}$.
    Because $P_{\mathrm{c}}$ is a path in $G'_{\epsilon_\mathrm{max}}$, $g_{\mathrm{eccg}}$ must be defined.
    Furthermore, since $g_{\mathrm{eccg}}$ is the length of the shortest path in $G'_{\epsilon_\mathrm{max}}$, it must be less than or equal to the length of any other path in $G'_{\epsilon_\mathrm{max}}$, including $P_{\mathrm{c}}$.
    Thus, we have $g_{\mathrm{eccg}} \leq g_{\mathrm{c}}$.

    Combining the two inequalities, $g_{\mathrm{c}}\leq g_{\mathrm{eccg}}$ and $g_{\mathrm{eccg}} \leq g_{\mathrm{c}}$, we conclude that $g_{\mathrm{c}}= g_{\mathrm{eccg}}$.

\end{proof}

Theorem~\ref{theorem:extra-cluster-gap-w-cg-special} guarantees that the extra-cluster gap w/ CG is exactly equal to the cluster gap for all instances where $g_{\text{c}}\leq\epsilon_\mathrm{max}$.
This makes the method both accurate and efficient, as the expensive calculation is performed only when necessary.

\subsubsection{Implementation Costs}
Finally, we consider the implementation costs of these extra-cluster gap methods.
The additional growth step is nearly identical in implementation to a standard UF decoder, sharing the same time complexity of $O(d^3\alpha(d^3))$, where $d$ is the code distance.
Hardware implementations of UF decoders can achieve decoding times under 1~\textmu{}s for $d=17$ on circuit-level noise models~\cite{liyanage2024heliosv2,ziad2024localclusteringdecoder}.
We expect that our extra-cluster gap w/o CG method can achieve comparable time complexity for similar code distances.
This approach is particularly advantageous in hardware implementation that executes cluster growth in parallel, especially if the additional growth range $\epsilon_\mathrm{max}/2$ is small.
In the next section, we will numerically evaluate these growth ranges and quantify how effectively the w/o CG method minimizes incorrect estimations.

The extra-cluster gap w/ CG method involves an additional step: calculating the shortest path on the cluster graph.
This step is distinct from the standard UF algorithm and adds complexity to a hardware implementation.
However, the probability of forming a boundary-to-boundary connection in a UF decoder is known to decrease rapidly as the code distance increases~\cite{griffiths2024ufwouf}.
We anticipate that such connections will also be rare in our additional growth step.
If these events are infrequent, the computationally intensive cluster graph analysis can be offloaded to separate, specialized hardware, thus minimizing the burden on the primary decoder.
In the next section, we will numerically evaluate the frequency of these connection events.

\section{Numerical Results}\label{sec:numerical}

In this section, we present numerical experiments to evaluate the performance of the bounded cluster gap and the extra-cluster gap.
We performed noisy circuit simulations using Stim~\cite{gidney2021stim} with a circuit-level noise model.
The simulations assumed rotated surface codes with a depth-6 syndrome measurement circuit and a physical error probability $p$.

A UF decoder implemented in Rust was used for our decoding.
From the resulting clusters, we calculated the cluster gap, bounded cluster gap, and extra-cluster gap.
Following previous works~\cite{gidney2025yokedsurfacecodes,jones2024improvedaccuracy,toshio2025decoderswitching}, we express the gap in decibels (dB).
The early-stopping threshold is set to $\epsilon_\mathrm{max}=20$~dB.
This threshold is chosen because it approximates the performance of the strong decoder in a decoder switching~\cite{toshio2025decoderswitching} and serves as a reference in the libra decoder~\cite{jones2024improvedaccuracy}.
The soft outputs obtained from these numerical experiments are consistent with Theorems~\ref{theorem:extra-cluster-gap-wo-cg-general}--\ref{theorem:extra-cluster-gap-w-cg-special}.
A detailed demonstration of this consistency is provided in Appendix~\ref{sec:consistency}.

\subsection{Visited Nodes of Bounded Cluster Gap}

The number of visited nodes serves as a direct proxy for the computational cost.
We therefore compare this metric to assess the performance of our proposed method.
Figure~\ref{fig:actually-visited-nodes} illustrates the reduction in the number of visited nodes when using the bounded cluster gap, which employs an early-stopping Dijkstra's algorithm, compared to the cluster gap.
This difference is more pronounced at lower physical error probabilities $p$.
For instance, the number of visited nodes is reduced by a factor of approximately 100 at $p=0.05\%$ and by a factor of 10 at $p=0.10\%$.

We fit the number of visited nodes for both methods to the power-law function
\begin{align}
    A d^B, \label{eq:nodes-fitting}
\end{align}
where the parameters $A$ and $B$ are determined by a least-squares fit on a log-log plot.
The resulting values of the exponent $B$ are listed in Table~\ref{tab:actually-visited-nodes-params}.
For the bounded cluster gap, the exponent $B$ is small at low physical error probabilities.
At $p=0.10\%$, the scaling is nearly quadratic ($B \approx 2.31$), approaching the complexity outlined in \eqref{eq:bounded-cluster-gap-time-complexity}.
In contrast, the cluster gap exhibits approximately cubic scaling ($B \approx 2.88$).

As $p$ increases, the number of visited nodes increases for the bounded cluster gap but decreases for the cluster gap.
This behavior in the bounded cluster gap occurs because a higher $p$ leads to larger clusters with zero-weight edges.
Even with a fixed $\epsilon_\mathrm{max}$, the algorithm must explore more nodes within these expanded zero-weight regions.
Conversely, for the cluster gap, these zero-weight regions are explored preferentially by Dijkstra's algorithm, allowing it to reach the boundary nodes more quickly and thus reducing the total number of visited nodes.
The number of visited nodes for both methods becomes comparable around $p=1.00\%$.

\begin{figure}[t]
    \includegraphics[scale=0.6]{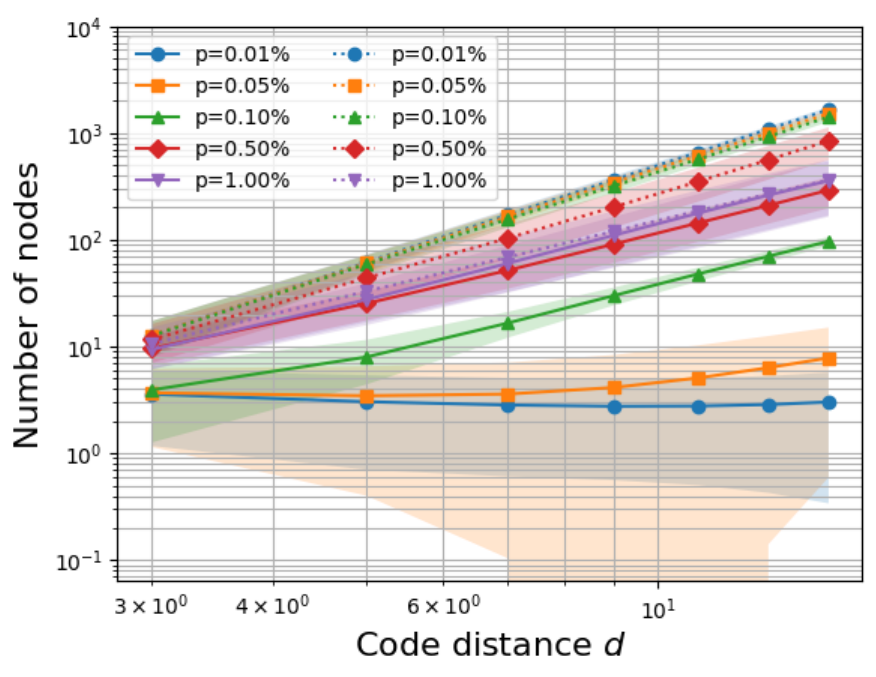}
    \caption{
        The number of nodes visited during Dijkstra's algorithm for the bounded cluster gap (solid lines) and the original cluster gap (dotted lines).
        The shaded areas represent the standard deviation.
        Each data point is an average over $5\cdot 10^6$ samples.
        Samples with no detection events are excluded from the analysis.
    }
    \label{fig:actually-visited-nodes}
\end{figure}

\begin{table}[t]
    \begin{center}
        \caption{
            Fitted exponent $B$ from the power-law fit of the number of visited nodes in Figure~\ref{fig:actually-visited-nodes} to \eqref{eq:nodes-fitting}.
            The fit uses data for code distances $d\geq 7$ to mitigate finite-size effects.
        }
        \label{tab:actually-visited-nodes-params}
        \begin{tabular}{ccc}
            \hline\hline
            $p$ & bounded cluster gap & cluster gap \\
            \hline
            $0.01\%$ & $0.08$ & $2.98$ \\
            $0.05\%$ & $1.03$ & $2.91$ \\
            $0.10\%$ & $2.31$ & $2.88$ \\
            $0.50\%$ & $2.27$ & $2.75$ \\
            $1.00\%$ & $2.34$ & $2.18$ \\
            \hline\hline
        \end{tabular}
    \end{center}
\end{table}

In the very low error regime of $p=0.01\%$, the corresponding edge weight is $w=\ln((1-p)/p)\approx 9.21$.
This value is much larger than the early-stopping threshold, which corresponds to $\epsilon_\mathrm{max}=20\ \mathrm{dB}\approx 4.605$ in natural units.
Consequently, the search terminates before even a single non-zero weight edge can be traversed.
Therefore, the number of visited nodes barely increases with the code distance $d$.

\subsection{Performance of Extra-Cluster Gap}

When a cluster graph is not used, it is possible for a sample to have an extra-cluster gap below $\epsilon_\mathrm{max}$ while its cluster gap is above $\epsilon_\mathrm{max}$.
Figure~\ref{fig:too-rejecting-ratio} plots the fraction of samples where the soft output (either the cluster gap or the extra-cluster gap w/o CG) is less than or equal to $\epsilon_\mathrm{max}=20$~dB.
For $p\leq 0.10\%$, this fraction decreases exponentially with $d$ for the extra-cluster gap w/o CG, similar to the trend observed for the cluster gap.
This exponential decay, mirroring the behavior of the cluster gap, confirms the practical viability of using the extra-cluster gap for applications such as decoder switching~\cite{toshio2025decoderswitching}.
However, for higher error rates ($p\geq 0.50\%$), this fraction no longer decreases for the extra-cluster gap, in contrast to the cluster gap, which still shows a slight decrease.

\begin{figure}[t]
    \includegraphics[scale=0.6]{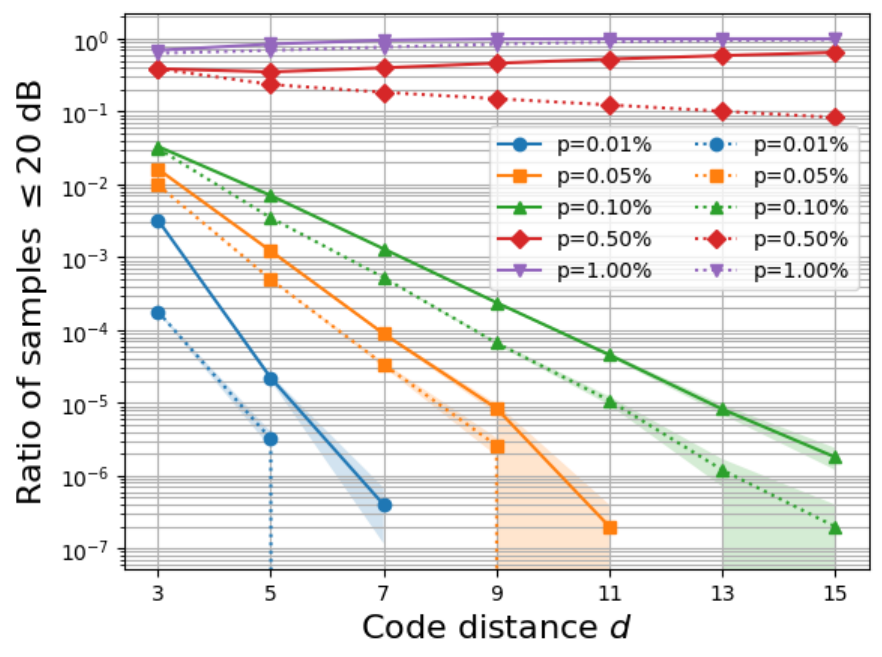}
    \caption{
        Comparison of the fraction of samples with a soft output below the threshold $\epsilon_\mathrm{max}=20$~dB.
        The solid line represents the extra-cluster gap w/o CG, and the dotted line represents the cluster gap.
        The shaded areas indicate the standard error.
        Each data point is an average over $5\cdot 10^6$ samples.
    }
    \label{fig:too-rejecting-ratio}
\end{figure}

\begin{table}[t]
    \begin{center}
        \caption{
            Fitted exponent B from the exponential fit of the data in Figure~\ref{fig:too-rejecting-ratio} to Eq.\eqref{eq:prob-fitting}.
            The table shows results only for $p\leq 0.10\%$, where the extra-cluster gap exhibits a negative slope.
            At $p=0.10\%$, the prefactor for the extra-cluster gap w/o CG is $A=10^{-0.38}$.
        }
        \label{tab:too-rejecting-ratio-params}
        \begin{tabular}{ccc}
            \hline\hline
            $p$ & extra-cluster gap w/o CG & cluster gap \\
            \hline
            $0.01\%$ & $-0.98$ & $-0.87$ \\
            $0.05\%$ & $-0.60$ & $-0.60$ \\
            $0.10\%$ & $-0.36$ & $-0.43$ \\
            \hline\hline
        \end{tabular}
    \end{center}
\end{table}

We fit the data to the exponential function
\begin{align}
    A \cdot 10^{Bd} \label{eq:prob-fitting}
\end{align}
using a least-squares method on a semi-log plot. The resulting exponents $B$ are presented in Table~\ref{tab:too-rejecting-ratio-params}.

The physical interpretation of this fraction depends on the application.
When a cluster graph is used, this fraction represents the probability that calculates the shortest distance on the graph is necessary.
Without a cluster graph, it corresponds to the post-selection rate in certain fault-tolerant schemes~\cite{bombin2024faulttolerantmsp,smith2024mitigatingerrorsinlq,gidney2024msc,hirano2025efficientmsc,dinca2025errormitigation,zhou2025errormitigation,sunami2025entanglementboosting} or the switching rate in hybrid decoders like decoder switching~\cite{toshio2025decoderswitching} and libra~\cite{jones2024improvedaccuracy}.
In Section~\ref{sec:decoder-switching}, we will focus on the implications for decoder switching.

\begin{figure}[t]
    \includegraphics[scale=0.6]{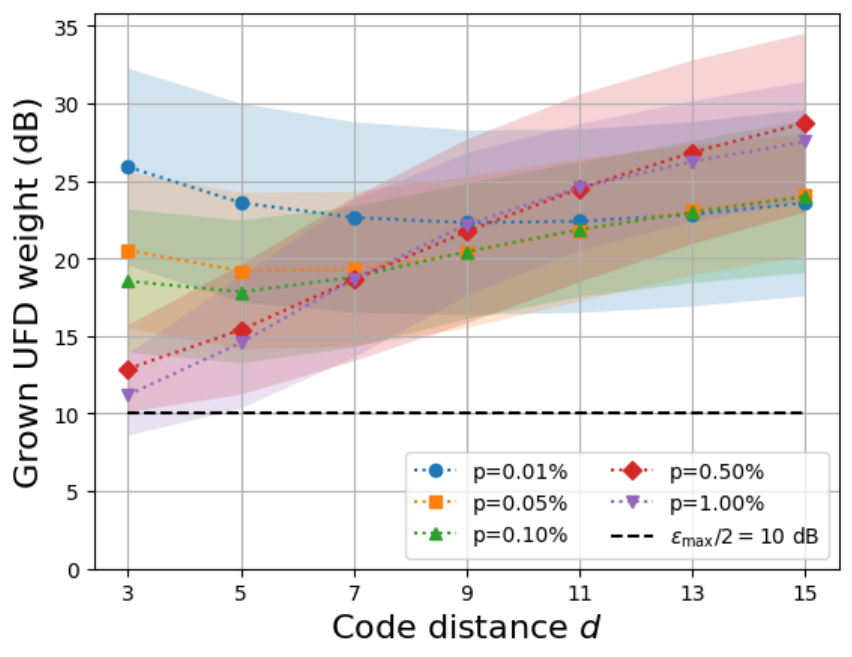}
    \caption{
        The maximum growth radius required for the UF decoder to complete its search.
        The shaded areas represent the standard deviation.
        Each data point is an average over $2\cdot 10^6$ samples.
    }
    \label{fig:ufd-weight}
\end{figure}

For parallel hardware implementations of a UF decoder, such as on an FPGA, growth operations at each node can be performed concurrently~\cite{liyanage2024heliosv2,chan2023actisstrictlylocal,ziad2024localclusteringdecoder}.
In this context, a key factor determining the total computation time is the number of parallel growth iterations required for the algorithm to terminate.
Figure~\ref{fig:ufd-weight} shows the maximum growth radius required for the standard UF decoder to complete.
The extra-cluster gap calculation limits this growth to a fixed value of $\epsilon_\mathrm{max}/2=10$~dB.
In contrast, the standard UF decoder requires a growth radius exceeding 20~dB for all tested $p$ and $d\geq 9$.
This suggests that calculating the extra-cluster gap requires fewer growth iterations than a full UF decoding, which could lead to a reduction in computation time in a parallel implementation.

For a sequential implementation, the total number of nodes within all clusters is a more relevant metric for computational cost~\cite{delfosse2020lineartimemaximumlikelihood}; these results are presented in Appendix~\ref{sec:nodes-in-clusters}.
As detailed in the appendix, for $p\leq 0.10\%$, the additional cluster growth for the extra-cluster gap results in a number of cluster nodes that scales more favorably with code distance $d$ compared to the standard UF decoder.

It is also insightful to compare the computational costs of the original cluster gap and the extra-cluster gap w/o CG.
A direct comparison is challenging because they rely on fundamentally different algorithms: the former uses a Dijkstra's search, while the latter employs an additional cluster growth.
Nevertheless, examining the number of nodes involved in each process provides a useful point of reference.
For instance, at a physical error rate of $p = 0.10\%$, the number of additional nodes engaged by the extra-cluster gap calculation (Figure~\ref{fig:nodes-within-clusters}, bottom) is smaller than the number of nodes visited by the cluster gap algorithm (Figure~\ref{fig:actually-visited-nodes}).

\section{Applications of Extra-Cluster Gap}\label{sec:applications}

In this section, we apply the results from our numerical experiments to evaluate the performance of our early-stopping techniques in several quantum error correction (QEC) applications.

\subsection{Decoder Switching}\label{sec:decoder-switching}

We now evaluate the performance of the extra-cluster gap w/o CG from the perspective of the decoder-switching scheme~\cite{toshio2025decoderswitching}.
Specifically, we investigate whether this method can prevent the backlog problem in two different scenarios.
The first scenario involves small code distances ($d \leq 17$), which are relevant for near-term quantum computers.
The second considers a practical code distance of $d=25$, which is required for large-scale applications such as 2048-bit factorization~\cite{gidney2025howtofactor}.

For both scenarios, we assume a physical error probability of $p=0.10\%$ under a circuit-level noise model and a syndrome generation time of $\tau_\mathrm{gen}=1$~\textmu{}s.
Following the setup in Ref.~\cite{toshio2025decoderswitching}, we assume the communication time for the weak decoder is equal to $\tau_\mathrm{gen}$, while both the decoding and communication times for the strong decoder are $10\tau_\mathrm{gen}$.

First, let us consider the near-term scenario with $d \leq 17$.
The results in Figure~\ref{fig:ufd-weight} show that the number of growth iterations required for the extra-cluster gap w/o CG is approximately half that of a full UF decoder.
Based on this, we make a pessimistic estimate that the computation time for the weak decoder is $\tau_\mathrm{dec}^\mathrm{weak}\approx 2\tau_\mathrm{UFD}$, where $\tau_\mathrm{UFD}$ is the computation time of the UF decoder.
For code distances up to $d=17$, the computation time of a UF decoder is reported to be at most $\tau_\mathrm{UFD}\leq 0.045$~\textmu{}s~\cite{liyanage2024heliosv2}, which gives
\begin{align}
    \tau_\mathrm{dec}^\mathrm{weak} / \tau_\mathrm{gen} \simeq 0.09.
\end{align}
According to Theorem~1 in Ref.~\cite{toshio2025decoderswitching}, for these setups, a backlog problem is expected to occur if the switching rate exceeds approximately $5 \times 10^{-2}$.
In our case, the switching rate corresponds to the probability  that $g_\mathrm{eccg}$ falls below 20~dB.
Then, Figure~\ref{fig:too-rejecting-ratio} indicates that the switching rate is at most $2 \times 10^{-2}$ for our setups.
Since this rate is well below the theoretical bound, we conclude that decoder switching using the extra-cluster gap w/o CG can successfully avoid the backlog problem even for code distances up to $d \leq 17$.

Next, we consider the large-scale application scenario with $d=25$.
For such a large code distance, a fully parallel implementation of the UF decoder, where each node of the decoding graph is mapped to a dedicated Processing Element (PE), becomes infeasible due to resource limitations.
To address this issue, time-multiplexing can be employed, where a single PE handles multiple nodes sequentially~\cite{liyanage2024heliosv2,ziad2024localclusteringdecoder}.
This approach reduces the required number of Look-Up Tables (\#LUTs) by a factor of approximately $1/n$ at the cost of increasing the computation time by a factor of $n$, where $n$ is the multiplexing factor.

Without time-multiplexing, a $d=25$ implementation is estimated to require approximately $3.7 \times 10^6$ LUTs (see Appendix~\ref{sec:luts} for details).
To fit within the resource constraints outlined in Table~1 of Ref.~\cite{liyanage2024heliosv2}, a time-multiplexing factor of $n=5$ is necessary.
Although Ref.~\cite{liyanage2024heliosv2} indicates that the UF decoder's execution time per round tends to decrease with increasing code distance, we adopt a pessimistic assumption.
We take the time for $d=25$ to be approximately 0.025~\textmu{}s, the value reported for $d=17$.
With a time-multiplexing factor of $n=5$, the UF decoder computation time becomes $\tau_\mathrm{UFD} = 5 \times 0.025~\text{\textmu s} = 0.125$~\textmu{}s.
Consequently, the weak decoder computation time, including the extra-cluster gap calculation, is estimated as $\tau_\mathrm{dec}^\mathrm{weak} = 2\tau_\mathrm{UFD} = 0.25$~\textmu{}s, leading to
\begin{align}
    \tau_\mathrm{dec}^\mathrm{weak} / \tau_\mathrm{gen} \simeq 0.25.
\end{align}
In this configuration, the backlog problem arises if the switching rate exceeds approximately $4 \times 10^{-2}$.
According to Table~\ref{tab:too-rejecting-ratio-params}, the switching rate for $d=25$ when using the extra-cluster gap w/o CG is merely $4.17 \times 10^{-10}$, which is orders of magnitude lower than the threshold.
Therefore, even for a large code distance of $d=25$, our proposed decoder-switching scheme can easily avoid the backlog problem.

In summary, our analysis shows that a decoder switching scheme incorporating the extra-cluster gap w/o CG enables backlog-free decoding across a wide range of scenarios.
This includes surface codes of sizes that will be feasible in the near future, as well as large-scale codes that will be required for practical applications in the FTQC era.

\subsection{Multiple Logical Boundaries}\label{sec:multiple-logical}

Next, we analyze the performance of soft-output computation in the presence of multiple logical boundaries.
Such configurations arise in architectures that use lattice surgery to perform entangling gates between logical qubits~\cite{horsman2012latticesurgery}.
For example, Figure~\ref{fig:multiple-logical-qubits} shows the \textit{compact-block} layout from Ref.~\cite{litinski2019gameofsurfacecodes}, where logical qubits $\ket{q_1}, \ldots, \ket{q_{12}}$ are coupled via a single ancilla region. 
Decoding such large-scale QEC codes often involves spatial partitioning with buffer zones of width $d$~\cite{lin2025spatiallyparalleldecoding,bombin2023modulardecoding}.
This partitioning yields multiple decoding problems within a certain sub-region, like the one outlined in blue in Figure~\ref{fig:multiple-logical-qubits}.
This blue region contains eight distinct $X$ boundaries, leading to $\binom{8}{2}=28$ possible pairings for which a soft output might be calculated.

\begin{figure}[t]
    \includegraphics[scale=0.7]{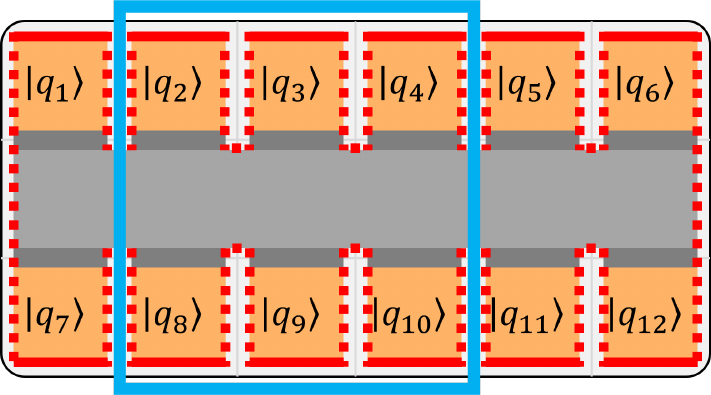}
    \caption{
        A schematic of multiple surface code patches forming the \textit{compact-block} layout in Ref.~\cite{litinski2019gameofsurfacecodes}.
        Solid red lines denote $Z$ boundaries, and dashed red lines denote $X$ boundaries.
        The blue box outlines a decoding region, including buffer zones, in a spatially parallel decoding scheme~\cite{lin2025spatiallyparalleldecoding}.
    }
    \label{fig:multiple-logical-qubits}
\end{figure}

The computational cost for this setup depends heavily on which type of soft output is employed. For example, when using the complementary gap, we need to solve separate MLE decoding tasks for each of the 28 pairs.
As suggested in prior works~\cite{meister2024efficientsoftoutput,lee2025efficientpostselection}, the decoding process becomes more efficient for the cluster gap or bounded cluster gap.
However, even in these cases, Dijkstra's search is still required from each of the eight boundaries.
In contrast to these previous attempts, the extra-cluster gap w/o CG only requires a single cluster growth operation.
When using a cluster graph (CG), the results in Table~\ref{tab:too-rejecting-ratio-params} indicate that the probability of needing a cluster graph calculation is low for $p=0.10\%$ and large $d$, since $28 \cdot 10^{-0.36d} \ll 1$.

\begin{table}[t]
    \begin{center}
        \caption{
            Expected number of soft-output calculations for a system with $M$ non-equivalent logical boundaries, assuming $p=0.10\%$ and sufficiently large $d$.
        }
        \begin{tabular}{lc}
            \hline\hline
            Method & Expected Computations \\
            \hline
            complementary gap & $O(M^2)$ \\
            cluster gap & $O(M)$ \\
            bounded cluster gap & $O(M)$ \\
            extra-cluster gap w/o CG & 1 \\
            extra-cluster gap w/ CG & $O(1)$ \\
            \hline\hline
        \end{tabular}
        \label{tab:expected-number-in-multi}
    \end{center}
\end{table}

More generally, for a system with $M$ logical boundaries, which can arise from partitioning in both space and time~\cite{dennis2002topologicalquantummemory,skoric2023parallelwindowdecoding,liyanage2025heliosv3,bombin2023modulardecoding}, the expected number of soft-output computations for each method scales as shown in Table~\ref{tab:expected-number-in-multi}.
These results demonstrate that the extra-cluster gap enables fast and scalable soft-output computation even in complex architectures with many logical boundaries.
This advantage is particularly relevant for general qLDPC codes, which can encode multiple logical qubits and for which the complementary gap is often impractical~\cite{lee2025efficientpostselection}.
The extra-cluster gap is therefore a promising tool for use with cluster-based decoders for qLDPC codes~\cite{wolanski2024ambiguityclustering,delfosse2022ufqldpc}.

\section{Conclusion}\label{sec:conclusion}

In this work, we introduced early-stopping techniques to accelerate the computation of soft outputs for real-time QEC decoding.
Specifically, we proposed two specific methods: the bounded cluster gap, which employs a bounded Dijkstra's algorithm, and the extra-cluster gap, which computes a soft output from minimally grown clusters.

Our analysis shows that the bounded cluster gap and the extra-cluster gap w/ CG produce results identical to the original cluster gap for all soft outputs below a predefined threshold $\epsilon_{\max}$.
This allows for significant computational speedups while preserving the performance benefits of the cluster gap, such as its use in post-selection.
The extra-cluster gap w/o CG is particularly well-suited for hardware implementation, as it reuses the standard cluster growth module of decoders like Union-Find decoder.
Crucially, this method does not miss any samples where the cluster gap is below $\epsilon_{\max}$.

Numerical experiments at a physical error rate of $p=0.10\%$ revealed that the bounded cluster gap exhibits a more favorable polynomial scaling with code distance $d$ compared to the original cluster gap.
Furthermore, the extra-cluster gap proved effective for applications such as decoder switching and for scenarios involving multiple logical boundaries, where it offers a significant performance advantage.

Future work will be directed toward implementing these algorithms on FPGAs to experimentally demonstrate the speed advantage of our early-stopping techniques.

\vskip2\baselineskip
NOTE ADDED: While completing this manuscript, we became aware of a related work by Ref.~\cite{xie2026simple}, which pursues a similar goal using a completely different approach and codes.
A key distinction is that their approach requires an additional re-decoding step after graph reweighting, whereas our extra-cluster growth method avoids such a computationally expensive process entirely.

\section{Acknowledgments}

We are grateful to thank Takumi Akiyama, Yugo Takada, Yutaro Akahoshi, Moeto Mishima, Shinichiro Yamano, Mitsuki Katsuda, Hoiki Liu, and Koki Chinzei for fruitful discussions.
K. F. is supported by MEXT Quantum Leap Flagship Program (MEXT Q-LEAP)
Grant No. JPMXS0120319794, JST COI-NEXT Grant No. JPMJPF2014, JST
Moonshot R\&D Grant No. JPMJMS2061, and JST CREST JPMJCR24I3.

\textbf{Author contributions}: R. T. initially conceived the concept of the extra-cluster growth method.
K. K. subsequently proposed its application to the calculation of soft outputs and introduced the early-stopping framework.
K. K. formulated the methods, implemented and performed all numerical simulations, and wrote the original draft of the manuscript.
K. K., R. T., and K. F. collaboratively developed the fundamental aspects of the theoretical proofs, which K. K. then finalized.
R. T. proposed the cluster graph and drafted the schematic illustrations.
J. F., H. O., and S. S. provided overall supervision, environments, and resources for this work and guided the research direction.
K. F. provided technical supervision, contributed to the conceptualization and the interpretation of the numerical results, and suggested the practical utility of the extra-cluster gap without a cluster graph (w/o CG).
All authors discussed the results and reviewed the manuscript.

\appendix

\section{The $\Delta$-stepping Algorithm on FPGAs} \label{sec:delta-stepping-fpga}

This appendix discusses the challenges of implementing the $\Delta$-stepping algorithm for shortest-path calculations on an FPGA alongside a Union-Find (UF) decoder.

A parallel UF decoder requires a number of processing cores proportional to the number of nodes in the decoding graph~\cite{liyanage2024heliosv2}.
In contrast, a parallel implementation of the $\Delta$-stepping algorithm~\cite{meyer2003deltastepping} requires cores proportional to both the number of nodes and the number of edges.
Although the performance of the $\Delta$-stepping algorithm can be improved by precomputing shortcut edges, this precomputation step has a time complexity of $O(\log d)$ and demands significant hardware resources.

More importantly, this precomputation would need to be performed for every sample, since the decoding graph is dynamically modified by the UF decoder, which sets the weights of intra-cluster edges to zero.
This makes precomputation impractical.
Even if shortcut edges are not used, which slows down the algorithm by a constant factor, the hardware requirements for $\Delta$-stepping remain substantial.
Therefore, implementing the $\Delta$-stepping algorithm separately from the UF decoder is challenging on resource-constrained platforms such as FPGAs.

\section{Number of Nodes in Clusters} \label{sec:nodes-in-clusters}

\begin{figure}[t]
    \includegraphics[scale=0.6]{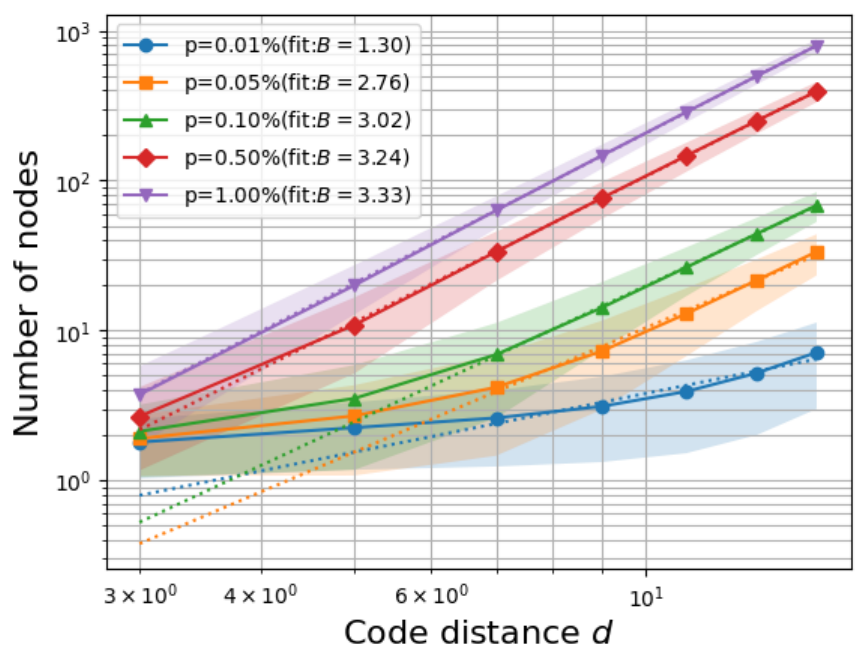}
    \includegraphics[scale=0.6]{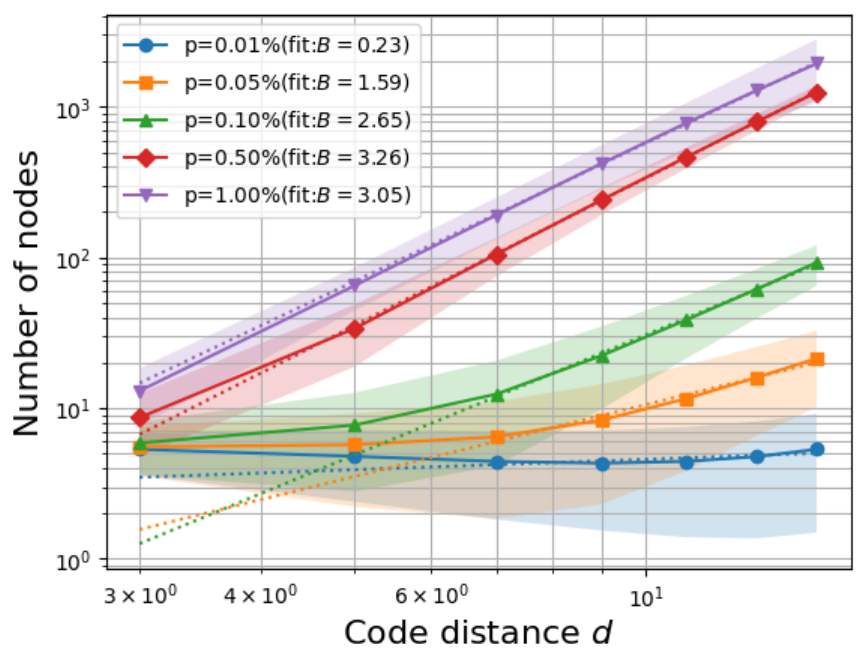}
    \caption{
        Scaling of cluster sizes with code distance $d$ for various physical error rates $p$.
        ({\bf Top}) The total number of nodes within clusters identified by the standard UF decoder.
        ({\bf Bottom}) The number of additional nodes included during the growth step of the extra-cluster gap method.
        Each data point is an average over $10^6$ samples, and the shaded areas represent the standard deviation.
        Dotted lines show the results of a power-law fit to the data for $d \geq 7$, with the fitted exponent $B$ listed in the legend.
    }
    \label{fig:nodes-within-clusters}
\end{figure}

Figure~\ref{fig:nodes-within-clusters} shows the scaling of the number of nodes within clusters as a function of the code distance $d$.
The top panel displays the size of clusters formed by the standard UF decoder, while the bottom panel shows the number of additional nodes incorporated during the growth phase of the extra-cluster gap method.

In both cases, the cluster size grows more rapidly with the code distance $d$ as the physical error probability $p$ increases.
This is expected, as higher error rates lead to larger error clusters.
Notably, for low error rates ($p \leq 0.10\%$), the number of additional nodes from the extra-cluster gap grows with a smaller exponent than the number of nodes in the original clusters.
This indicates a more favorable scaling for the additional growth step required by our method in the low-error regime.

\section{Estimation of Required LUTs for $d=25$}\label{sec:luts}

\begin{figure}[t]
    \includegraphics[scale=0.6]{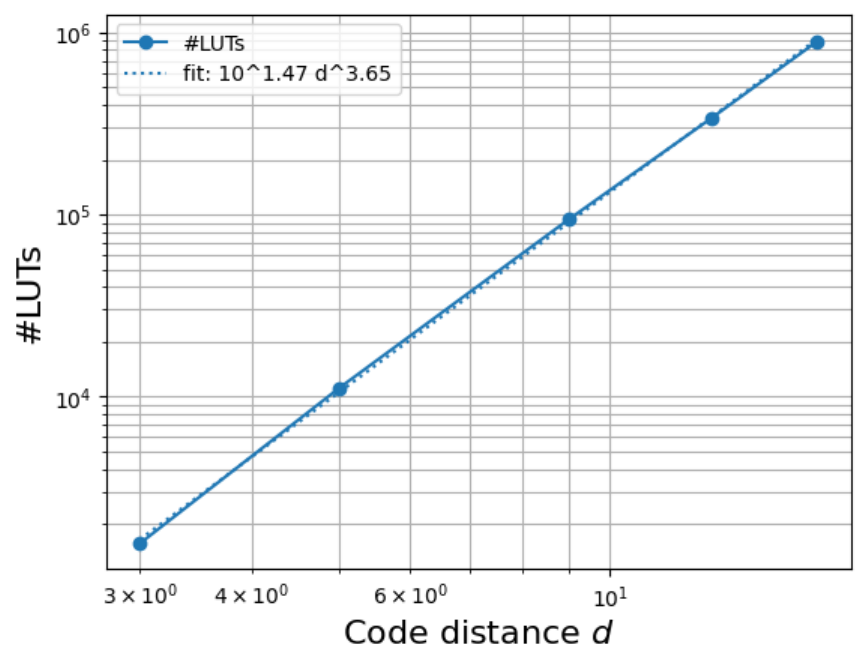}
    \caption{
        The solid line plots the number of Look-Up Tables (\#LUTs) for the circuit-level noise model, with data taken from Table~1 of Ref.~\cite{liyanage2024heliosv2}.
        The dotted line represents a least-squares fit to this data on a log-log scale using \eqref{eq:nodes-fitting}.
    }
    \label{fig:lut-by-helios}
\end{figure}

In this appendix, we estimate the number of LUTs required for an FPGA implementation with a code distance of $d=25$ without time-multiplexing.
Figure~\ref{fig:lut-by-helios} shows the required \#LUTs for various code distances under a circuit-level noise model, as reported in Table~1 of Ref.~\cite{liyanage2024heliosv2}.
By extrapolating from a least-squares fit to this data using \eqref{eq:nodes-fitting}, we find that the estimated number of LUTs required for $d=25$ is $3.7\cdot 10^6$.

\section{Consistency with Early Stopping}\label{sec:consistency}

\begin{figure*}
    \includegraphics[scale=0.15]{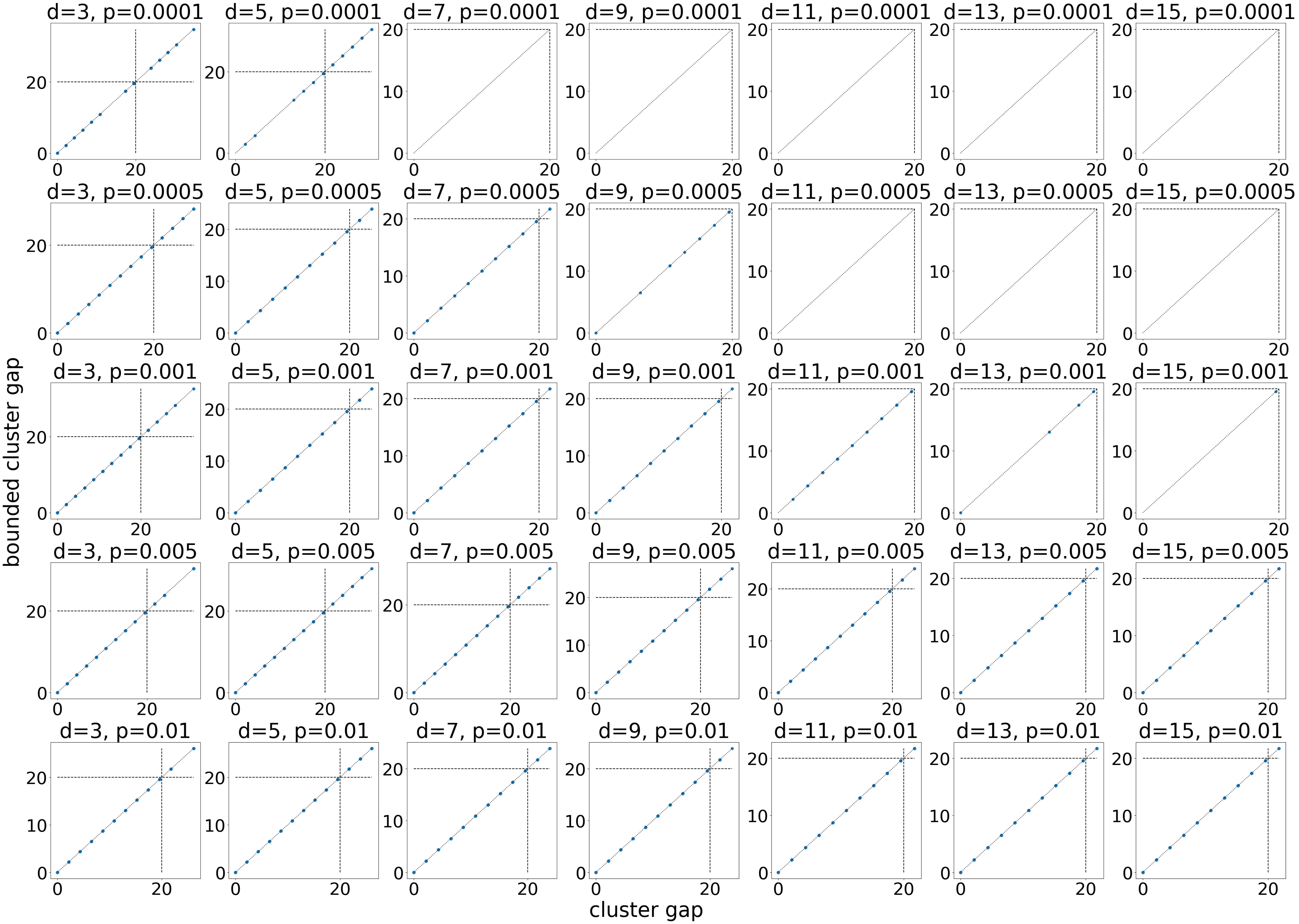}
    \caption{
        Comparison between the bounded cluster gap (vertical axis) and the original cluster gap (horizontal axis) for various code distances $d$ and physical error rates $p$.
        Each plot is generated from $5 \times 10^6$ samples.
        The dashed diagonal line represents equality between the two gaps, confirming their consistency up to the threshold of $\epsilon_\mathrm{max} = 20$~dB.
    }
    \label{fig:consistency-bounded}
\end{figure*}

\begin{figure*}[htb]
    \includegraphics[scale=0.15]{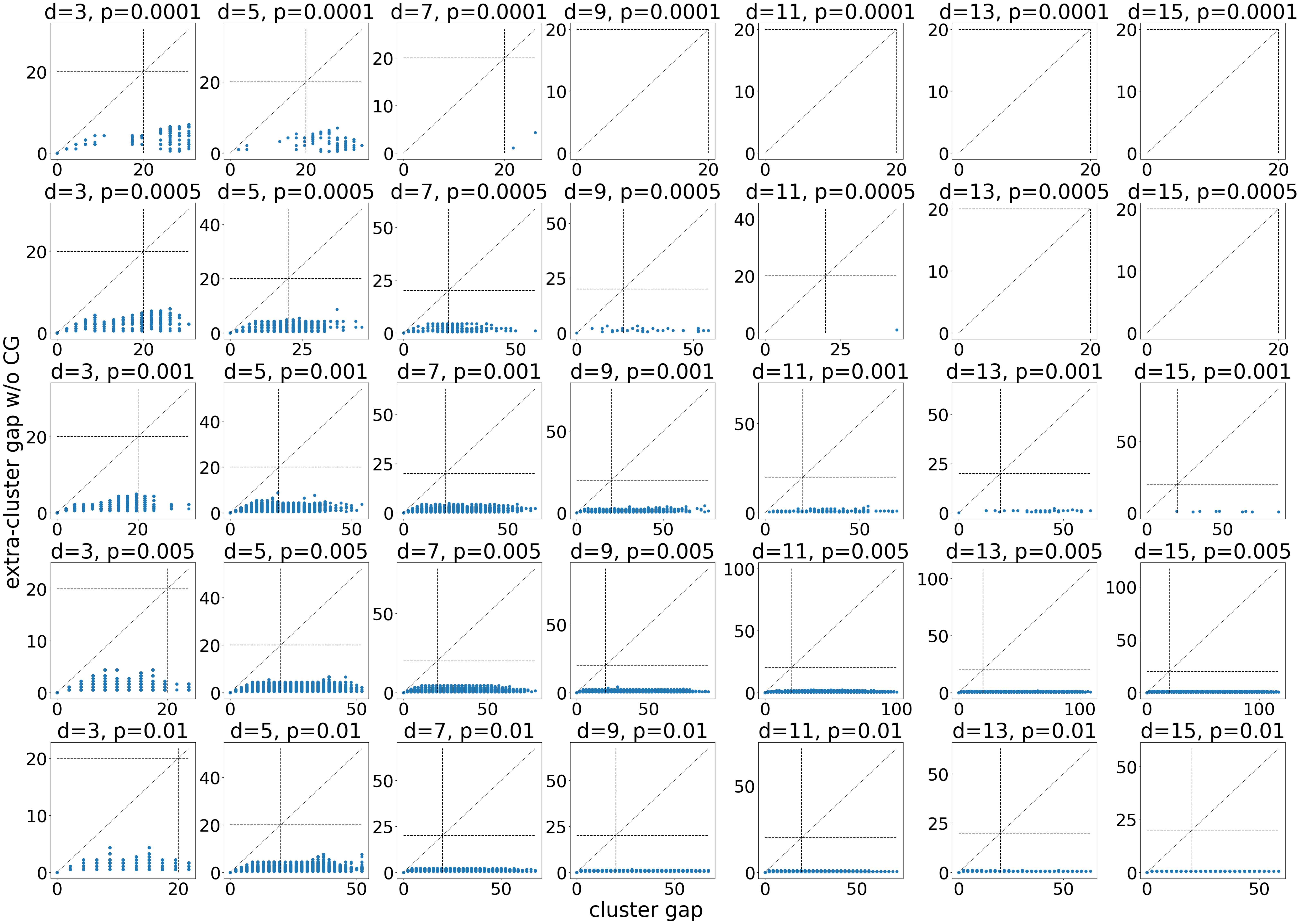}
    \includegraphics[scale=0.15]{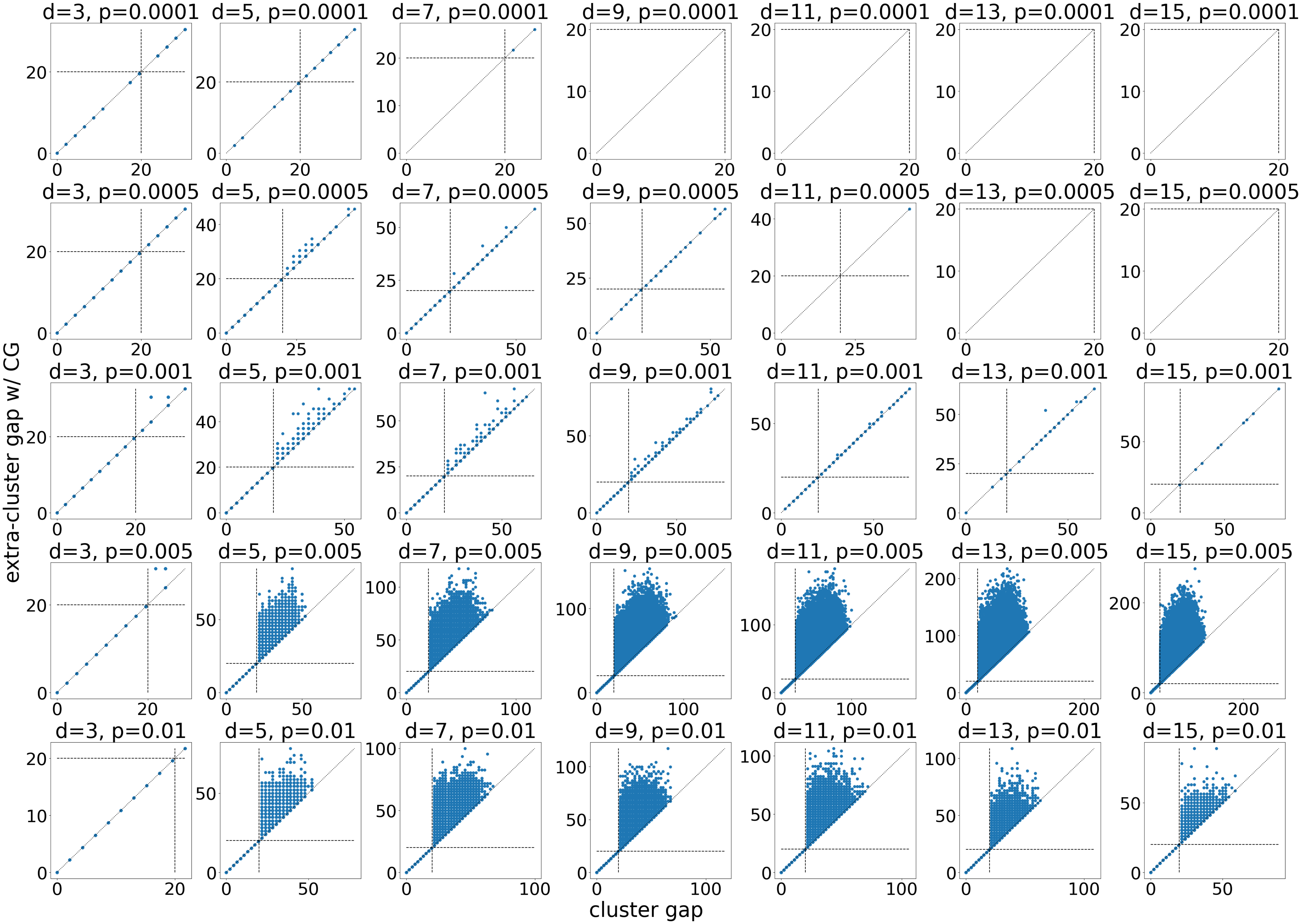}
    \caption{
        Comparison between the extra-cluster gap (vertical axis) and the original cluster gap (horizontal axis) for various code distances $d$ and physical error rates $p$.
        The top row of plots shows the result for the extra-cluster gap without a cluster graph (w/o CG), while the bottom row shows the result with a cluster graph (w/ CG).
        Each plot is generated from $5 \times 10^6$ samples.
        The dashed diagonal line indicates equality between the plotted quantities.
    }
    \label{fig:consistency-extra}
\end{figure*}

In this appendix, we verify that the soft-output values obtained from our proposed methods are consistent with the theoretical predictions.

First, we examine the bounded cluster gap, which is introduced in Section~\ref{sec:bounded-cluster-gap}.
This method is designed to identify all cluster gaps with a value up to a predefined threshold, $\epsilon_\mathrm{max}$, by leveraging the properties of the bounded Dijkstra's algorithm~\cite{bemten2019boundeddijkstra}.
As depicted in Figure~\ref{fig:consistency-bounded}, our numerical results confirm this behavior.
The values of the bounded cluster gap and the original cluster gap are in perfect agreement for all samples with a gap up to the threshold of $\epsilon_\mathrm{max} = 20$~dB.
Furthermore, our results show that a soft-output value is always produced whenever the cluster gap is less than or equal to $\epsilon_\mathrm{max}$, ensuring no instances are missed.

Next, we evaluate the consistency of the extra-cluster gap, which, as detailed in Section~\ref{sec:extra-cluster-gap}, has two variants: one without a cluster graph (w/o CG) and another with a cluster graph (w/ CG).
The theoretical behavior of these two variants differs.

According to Theorems~\ref{theorem:extra-cluster-gap-wo-cg-general} and \ref{theorem:extra-cluster-gap-wo-cg-special}, the extra-cluster gap w/o CG guarantees that no sample with a cluster gap below $\epsilon_\mathrm{max}$ is missed.
However, it may still produce an output below $\epsilon_\mathrm{max}$ even when the cluster gap is larger.
In contrast, Theorems~\ref{theorem:extra-cluster-gap-w-cg-general} and \ref{theorem:extra-cluster-gap-w-cg-special} state that the w/ CG variant is more precise.
It provides a value exactly equal to the cluster gap for all instances up to $\epsilon_\mathrm{max}$ and does not incorrectly report a value below this threshold for samples with a cluster gap larger than $\epsilon_\mathrm{max}$.

The plots in Figure~\ref{fig:consistency-extra} confirm that both the extra-cluster gap w/o CG and w/ CG variants exhibit their respective theoretical behaviors.
For both methods, we also confirmed that a soft output was consistently generated for every sample with a cluster gap below the $\epsilon_\mathrm{max}$ threshold.

These findings collectively confirm that our implementation of the proposed methods aligns with the theoretically predicted outcomes.

\clearpage
\bibliography{bibliography}

@misc{toshio2025decoderswitching,
  title = {Decoder Switching: Breaking the Speed-Accuracy Tradeoff in Real-Time Quantum Error Correction},
  author = {Riki Toshio and Kaito Kishi and Jun Fujisaki and Hirotaka Oshima and Shintaro Sato and Keisuke Fujii},
  year = {2025},
  eprint = {arXiv:2510.25222},
  primaryClass={quant-ph}
}

@misc{meister2024efficientsoftoutput,
  title = {Efficient soft-output decoders for the surface code},
  author = {Nadine Meister and Christopher A. Pattison and John Preskill},
  year = {2024},
  eprint = {arXiv:2405.07433},
  primaryClass={quant-ph}
}

@article{meyer2003deltastepping,
title = {{$\Delta$}-stepping: a parallelizable shortest path algorithm},
journal = {Journal of Algorithms},
volume = {49},
number = {1},
pages = {114-152},
year = {2003},
note = {1998 European Symposium on Algorithms},
issn = {0196-6774},
doi = {https://doi.org/10.1016/S0196-6774(03)00076-2},
url = {https://www.sciencedirect.com/science/article/pii/S0196677403000762},
author = {U. Meyer and P. Sanders}
}

@misc{duan2025fasterdijkstra,
  title = {Breaking the Sorting Barrier for Directed Single-Source Shortest Paths},
  author = {Ran Duan and Jiayi Mao and Xiao Mao and Xinkai Shu and Longhui Yin},
  year = {2025},
  eprint = {arXiv:2504.17033},
  primaryClass={cs.DS}
}

@article{terhal2015backlog,
  title = {Quantum error correction for quantum memories},
  author = {Terhal, Barbara M.},
  journal = {Rev. Mod. Phys.},
  volume = {87},
  issue = {2},
  pages = {307--346},
  numpages = {40},
  year = {2015},
  month = {Apr},
  publisher = {American Physical Society},
  doi = {10.1103/RevModPhys.87.307},
  url = {https://link.aps.org/doi/10.1103/RevModPhys.87.307}
}

@article{gidney2025yokedsurfacecodes,
  author = {Craig Gidney and Michael Newman and Peter Brooks and Cody Jones},
  journal = {Nat. Commun.},
  number = {4498},
  title = {Yoked surface codes},
  volume = {16},
  year = {2025},
  doi = {10.1038/s41467-025-59714-1}
}

@article{bombin2024faulttolerantmsp,
  title = {Fault-Tolerant Postselection for Low-Overhead Magic State Preparation},
  author = {Bomb\'{\i}n, H\'ector and Pant, Mihir and Roberts, Sam and Seetharam, Karthik I.},
  journal = {PRX Quantum},
  volume = {5},
  issue = {1},
  pages = {010302},
  numpages = {19},
  year = {2024},
  month = {Jan},
  publisher = {American Physical Society},
  doi = {10.1103/PRXQuantum.5.010302},
  url = {https://link.aps.org/doi/10.1103/PRXQuantum.5.010302}
}

@article{smith2024mitigatingerrorsinlq,
  author = {Samuel C. Smith and Benjamin J. Brown and Stephen D. Bartlett},
  journal = {Commun. Phys.},
  number = {386},
  title = {Mitigating errors in logical qubits},
  volume = {7},
  year = {2024}
}

@ARTICLE{liyanage2024heliosv2,
  author={Liyanage, Namitha and Wu, Yue and Tagare, Siona and Zhong, Lin},
  journal={IEEE Transactions on Quantum Engineering}, 
  title={FPGA-Based Distributed Union-Find Decoder for Surface Codes}, 
  year={2024},
  volume={5},
  number={},
  pages={1-18},
  doi={10.1109/TQE.2024.3467271}
}

@article{higgott2025sparseblossom,
  doi = {10.22331/q-2025-01-20-1600},
  url = {https://doi.org/10.22331/q-2025-01-20-1600},
  title = {Sparse {B}lossom: correcting a million errors per core second with minimum-weight matching},
  author = {Higgott, Oscar and Gidney, Craig},
  journal = {{Quantum}},
  issn = {2521-327X},
  publisher = {{Verein zur F{\"{o}}rderung des Open Access Publizierens in den Quantenwissenschaften}},
  volume = {9},
  pages = {1600},
  month = jan,
  year = {2025}
}

@INPROCEEDINGS {wu2023fusionblossom,
author = { Wu, Yue and Zhong, Lin },
booktitle = { 2023 IEEE International Conference on Quantum Computing and Engineering (QCE) },
title = {{ Fusion Blossom: Fast MWPM Decoders for QEC }},
year = {2023},
volume = {},
ISSN = {},
pages = {928-938},
doi = {10.1109/QCE57702.2023.00107},
url = {https://doi.ieeecomputersociety.org/10.1109/QCE57702.2023.00107},
publisher = {IEEE Computer Society},
address = {Los Alamitos, CA, USA},
month =sep
}

@article{delfosse2020lineartimemaximumlikelihood,
  title = {Linear-time maximum likelihood decoding of surface codes over the quantum erasure channel},
  author = {Delfosse, Nicolas and Z\'emor, Gilles},
  journal = {Phys. Rev. Res.},
  volume = {2},
  issue = {3},
  pages = {033042},
  numpages = {5},
  year = {2020},
  month = {Jul},
  publisher = {American Physical Society},
  doi = {10.1103/PhysRevResearch.2.033042},
  url = {https://link.aps.org/doi/10.1103/PhysRevResearch.2.033042}
}

@misc{wu2022interpretationunionfind,
  title = {An interpretation of Union-Find Decoder on Weighted Graphs},
  author = {Yue Wu and Namitha Liyanage and Lin Zhong},
  year = {2022},
  eprint = {arXiv:2211.03288},
  primaryClass={quant-ph}
}

@misc{jones2024improvedaccuracy,
  title = {Improved accuracy for decoding surface codes with matching synthesis},
  author = {Cody Jones},
  year = {2024},
  eprint = {arXiv:2408.12135},
  primaryClass={quant-ph}
}

@article{dijkstra1959anote,
  author = {E. W. Dijkstra},
  journal = {Numerische Mathematik},
  pages = {269--271},
  title = {A note on two problems in connexion with graphs},
  volume = {1},
  year = {1959}
}

@misc{bemten2019boundeddijkstra,
  title = {Bounded Dijkstra (BD): Search Space Reduction for Expediting Shortest Path Subroutines},
  author = {Amaury Van Bemten and Jochen W. Guck and Carmen Mas Machuca and Wolfgang Kellerer},
  year = {2019},
  eprint = {arXiv:1903.00436},
  primaryClass={cs.NI}
}

@article{litinski2019gameofsurfacecodes,
  doi = {10.22331/q-2019-03-05-128},
  url = {https://doi.org/10.22331/q-2019-03-05-128},
  title = {A {G}ame of {S}urface {C}odes: {L}arge-{S}cale {Q}uantum {C}omputing with {L}attice {S}urgery},
  author = {Litinski, Daniel},
  journal = {{Quantum}},
  issn = {2521-327X},
  publisher = {{Verein zur F{\"{o}}rderung des Open Access Publizierens in den Quantenwissenschaften}},
  volume = {3},
  pages = {128},
  month = mar,
  year = {2019}
}

@article{lin2025spatiallyparalleldecoding,
doi = {10.1088/2058-9565/adc6b6},
url = {https://dx.doi.org/10.1088/2058-9565/adc6b6},
year = {2025},
month = {apr},
publisher = {IOP Publishing},
volume = {10},
number = {3},
pages = {035007},
author = {Fuhui Lin, Sophia and Peterson, Eric C and Sankar, Krishanu and Sivarajah, Prasahnt},
title = {Spatially parallel decoding for multi-qubit lattice surgery},
journal = {Quantum Science and Technology}
}

@article{skoric2023parallelwindowdecoding,
  author = {Luka Skoric and Dan E. Browne and Kenton M. Barnes and Neil I. Gillespie and Earl T. Campbell},
  journal = {Nature Communications},
  number = {7040},
  title = {Parallel window decoding enables scalable fault tolerant quantum computation},
  volume = {14},
  year = {2023}
}

@article{horsman2012latticesurgery,
doi = {10.1088/1367-2630/14/12/123011},
url = {https://dx.doi.org/10.1088/1367-2630/14/12/123011},
year = {2012},
month = {dec},
publisher = {IOP Publishing},
volume = {14},
number = {12},
pages = {123011},
author = {Horsman, Dominic and Fowler, Austin G and Devitt, Simon and Meter, Rodney Van},
title = {Surface code quantum computing by lattice surgery},
journal = {New Journal of Physics}
}

@misc{bombin2023modulardecoding,
  title = {Modular decoding: parallelizable real-time decoding for quantum computers},
  author = {Héctor Bombín and Chris Dawson and Ye-Hua Liu and Naomi Nickerson and Fernando Pastawski and Sam Roberts},
  year = {2023},
  eprint = {arXiv:2303.04846},
  primaryClass={quant-ph}
}

@article{gidney2021stim,
  doi = {10.22331/q-2021-07-06-497},
  url = {https://doi.org/10.22331/q-2021-07-06-497},
  title = {Stim: a fast stabilizer circuit simulator},
  author = {Gidney, Craig},
  journal = {{Quantum}},
  issn = {2521-327X},
  publisher = {{Verein zur F{\"{o}}rderung des Open Access Publizierens
                in den Quantenwissenschaften}},
  volume = 5,
  pages = 497,
  month = jul,
  year = 2021
}

@article{cao2019quantumchemistry,
author = {Cao, Yudong and Romero, Jonathan and Olson, Jonathan P. and Degroote, Matthias and Johnson, Peter D. and Kieferová, Mária and Kivlichan, Ian D. and Menke, Tim and Peropadre, Borja and Sawaya, Nicolas P. D. and Sim, Sukin and Veis, Libor and Aspuru-Guzik, Alán},
title = {Quantum Chemistry in the Age of Quantum Computing},
journal = {Chemical Reviews},
volume = {119},
number = {19},
pages = {10856-10915},
year = {2019},
doi = {10.1021/acs.chemrev.8b00803},
    note ={PMID: 31469277},
URL = { 
        https://doi.org/10.1021/acs.chemrev.8b00803
},
eprint = { 
        https://doi.org/10.1021/acs.chemrev.8b00803
}
}

@article{shor1994algo,
  title = {Algorithms for quantum computation: discrete logarithms and factoring},
  author={Peter W. Shor},
  journal = {Proceedings 35th Annual Symposium on Foundations of Computer Science},
  pages = {124-134},
  publisher = {IEEE},
  year = {1994}
}

@misc{gidney2025howtofactor,
  title = {How to factor 2048 bit RSA integers with less than a million noisy qubits},
  author = {Craig Gidney},
  year = {2025},
  eprint = {arXiv:2505.15917},
  primaryClass={quant-ph}
}

@article{biamonte2017qml,
  author = {Jacob Biamonte and Peter Wittek and Nicola Pancotti and Patrick Rebentrost and Nathan Wiebe and Seth Lloyd},
  journal = {Nature},
  pages = {195--202},
  title = {Quantum machine learning},
  volume = {549},
  year = {2017}
}

@article{dennis2002topologicalquantummemory,
  author = {Eric Dennis and Alexei Kitaev and Andrew Landahl and John Preskill},
  journal = {J. Math. Phys.},
  pages = {4452--4505},
  title = {Topological quantum memory},
  volume = {43},
  issue = {9},
  year = {2002}
}

@misc{bravyi1998quantumcodesboundary,
  title = {Quantum codes on a lattice with boundary},
  author = {S. B. Bravyi and A. Yu. Kitaev},
  year = {1998},
  eprint = {arXiv:quant-ph/9811052},
  primaryClass={quant-ph}
}

@article{kitaev2003faulttolerant,
title = {Fault-tolerant quantum computation by anyons},
journal = {Annals of Physics},
volume = {303},
number = {1},
pages = {2-30},
year = {2003},
issn = {0003-4916},
doi = {https://doi.org/10.1016/S0003-4916(02)00018-0},
url = {https://www.sciencedirect.com/science/article/pii/S0003491602000180},
author = {A.Yu. Kitaev}
}

@article{chan2023actisstrictlylocal,
  doi = {10.22331/q-2023-11-14-1183},
  url = {https://doi.org/10.22331/q-2023-11-14-1183},
  title = {Actis: {A} {S}trictly {L}ocal {U}nion–{F}ind {D}ecoder},
  author = {Chan, Tim and Benjamin, Simon C.},
  journal = {{Quantum}},
  issn = {2521-327X},
  publisher = {{Verein zur F{\"{o}}rderung des Open Access Publizierens in den Quantenwissenschaften}},
  volume = {7},
  pages = {1183},
  month = nov,
  year = {2023}
}

@misc{lee2025efficientpostselection,
  title = {Efficient Post-Selection for General Quantum LDPC Codes},
  author = {Seok-Hyung Lee and Lucas English and Stephen D. Bartlett},
  year = {2025},
  eprint = {arXiv:2510.05795},
  primaryClass={quant-ph}
}

@misc{wolanski2024ambiguityclustering,
  title = {Ambiguity Clustering: an accurate and efficient decoder for qLDPC codes},
  author = {Stasiu Wolanski and Ben Barber},
  year = {2024},
  eprint = {arXiv:2406.14527},
  primaryClass={quant-ph}
}

@misc{ziad2024localclusteringdecoder,
  title = {Local Clustering Decoder: a fast and adaptive hardware decoder for the surface code},
  author = {Abbas B. Ziad and Ankit Zalawadiya and Canberk Topal and Joan Camps and György P. Gehér and Matthew P. Stafford and Mark L. Turner},
  year = {2024},
  eprint = {arXiv:2411.10343},
  primaryClass={quant-ph}
}

@misc{liyanage2025heliosv3,
  title = {Network-Integrated Decoding System for Real-Time Quantum Error Correction with Lattice Surgery},
  author = {Namitha Liyanage and Yue Wu and Emmet Houghton and Lin Zhong},
  year = {2025},
  eprint = {arXiv:2504.11805},
  primaryClass={quant-ph}
}

@article{valentino2025quekuf,
author = {Valentino, Federico and Branchini, Beatrice and Conficconi, Davide and Sciuto, Donatella and Santambrogio, Marco D.},
title = {QUEKUF: An FPGA Union Find Decoder for Quantum Error Correction on the Toric Code},
year = {2025},
issue_date = {September 2025},
publisher = {Association for Computing Machinery},
address = {New York, NY, USA},
volume = {18},
number = {3},
issn = {1936-7406},
url = {https://doi.org/10.1145/3733239},
doi = {10.1145/3733239},
journal = {ACM Trans. Reconfigurable Technol. Syst.},
month = aug,
articleno = {33},
numpages = {26}
}

@INPROCEEDINGS{heer2023noveluf,
  author={Heer, Maximilian Jakob and Sozzo, Emanuele Del and Fujii, Keisuke and Sano, Kentaro},
  booktitle={2023 IEEE International Parallel and Distributed Processing Symposium Workshops (IPDPSW)}, 
  title={Novel Union-Find-based Decoders for Scalable Quantum Error Correction on Systolic Arrays}, 
  year={2023},
  volume={},
  number={},
  pages={524-533},
  doi={10.1109/IPDPSW59300.2023.00092}}

@INPROCEEDINGS{heer2023achievingscalable,
  author={Heer, Maximilian Jakob and Wichmann, Jan-Erik R. and Sano, Kentaro},
  booktitle={2023 IEEE International Conference on Quantum Computing and Engineering (QCE)}, 
  title={Achieving Scalable Quantum Error Correction with Union-Find on Systolic Arrays by Using Multi-Context Processing Elements}, 
  year={2023},
  volume={02},
  number={},
  pages={242-243},
  doi={10.1109/QCE57702.2023.10224}}

@article{barber2025collisionclustering,
    author = {Ben Barber and Kenton M. Barnes and Tomasz Bialas and Okan Buğdaycı and Earl T. Campbell and Neil I. Gillespie and Kauser Johar and Ram Rajan and Adam W. Richardson and Luka Skoric and Canberk Topal and Mark L. Turner and Abbas B. Ziad},
    title = {A real-time, scalable, fast and resource-efficient decoder for a quantum computer},
    journal = {Nature Electronics},
    year = {2025},
    volumes = {8},
    pages = {84--91}
}

@inproceedings{dong2021efficientstepping,
author = {Dong, Xiaojun and Gu, Yan and Sun, Yihan and Zhang, Yunming},
title = {Efficient Stepping Algorithms and Implementations for Parallel Shortest Paths},
year = {2021},
isbn = {9781450380706},
publisher = {Association for Computing Machinery},
address = {New York, NY, USA},
url = {https://doi.org/10.1145/3409964.3461782},
doi = {10.1145/3409964.3461782},
booktitle = {Proceedings of the 33rd ACM Symposium on Parallelism in Algorithms and Architectures},
pages = {184–197},
numpages = {14},
location = {Virtual Event, USA},
series = {SPAA '21}
}

@inproceedings{vedadi2025hybstepping,
author = {Vedadi Gargary, Ashkan and Fuad, Sakib},
title = {Hyb-Stepping: Hybrid Stepping for Parallel Shortest Paths},
year = {2025},
isbn = {9798400714467},
publisher = {Association for Computing Machinery},
address = {New York, NY, USA},
url = {https://doi.org/10.1145/3711708.3723447},
doi = {10.1145/3711708.3723447},
booktitle = {Proceedings of the 1st FastCode Programming Challenge},
pages = {48–54},
numpages = {7},
location = {The Westin Las Vegas Hotel \& Spa, Las Vegas, NV, USA},
series = {FCPC '25}
}

@misc{gidney2024msc,
  title = {Magic state cultivation: growing T states as cheap as CNOT gates},
  author = {Craig Gidney and Noah Shutty and Cody Jones},
  year = {2024},
  eprint = {arXiv:2409.17595},
  primaryClass={quant-ph}
}

@misc{hirano2025efficientmsc,
  title = {Efficient magic state cultivation with lattice surgery},
  author = {Yutaka Hirano and Riki Toshio and Tomohiro Itogawa and Keisuke Fujii},
  year = {2025},
  eprint = {arXiv:2510.24615},
  primaryClass={quant-ph}
}

@misc{akahoshi2025runtimereduction,
  title = {Runtime reduction in lattice surgery utilizing time-like soft information},
  author = {Yutaro Akahoshi and Riki Toshio and Jun Fujisaki and Hirotaka Oshima and Shintaro Sato and Keisuke Fujii},
  year = {2025},
  eprint = {arXiv:2510.21149},
  primaryClass={quant-ph}
}

@article{griffiths2024ufwouf,
  title = {Union-find quantum decoding without union-find},
  author = {Griffiths, Sam J. and Browne, Dan E.},
  journal = {Phys. Rev. Res.},
  volume = {6},
  issue = {1},
  pages = {013154},
  numpages = {9},
  year = {2024},
  month = {Feb},
  publisher = {American Physical Society},
  doi = {10.1103/PhysRevResearch.6.013154},
  url = {https://link.aps.org/doi/10.1103/PhysRevResearch.6.013154}
}

@ARTICLE{delfosse2022ufqldpc,
  author={Delfosse, Nicolas and Londe, Vivien and Beverland, Michael E.},
  journal={IEEE Transactions on Information Theory}, 
  title={Toward a Union-Find Decoder for Quantum LDPC Codes}, 
  year={2022},
  volume={68},
  number={5},
  pages={3187-3199},
  doi={10.1109/TIT.2022.3143452}}

@article{tarjan1975efficiency,
author = {Tarjan, Robert Endre},
title = {Efficiency of a Good But Not Linear Set Union Algorithm},
year = {1975},
issue_date = {April 1975},
publisher = {Association for Computing Machinery},
address = {New York, NY, USA},
volume = {22},
number = {2},
issn = {0004-5411},
url = {https://doi.org/10.1145/321879.321884},
doi = {10.1145/321879.321884},
journal = {J. ACM},
month = apr,
pages = {215–225},
numpages = {11}
}

@misc{dinca2025errormitigation,
  title = {Error mitigation for logical circuits using decoder confidence},
  author = {Maria Dincă and Tim Chan and Simon C. Benjamin},
  year = {2025},
  eprint = {arXiv:2512.15689},
  primaryClass={quant-ph}
}

@misc{zhou2025errormitigation,
  title = {Error Mitigation of Fault-Tolerant Quantum Circuits with Soft Information},
  author = {Zeyuan Zhou and Shaun Pexton and Aleksander Kubica and Yongshan Ding},
  year = {2025},
  eprint = {arXiv:2512.09863},
  primaryClass={quant-ph}
}

@misc{sunami2025entanglementboosting,
  title = {Entanglement boosting: Low-volume logical Bell pair preparation for distributed fault-tolerant quantum computation},
  author = {Shinichi Sunami and Yutaka Hirano and Toshihide Hinokuma and Hayata Yamasaki},
  year = {2025},
  eprint = {arXiv:2511.10729},
  primaryClass={quant-ph}
}

@misc{xie2026simple,
  title = {Simple, Efficient, and Generic Post-Selection Decoding for qLDPC codes},
  author = {Haipeng Xie and Nobuyuki Yoshioka and Kento Tsubouchi and Ying Li},
  year = {2026},
  eprint = {arXiv:2601.17757},
  primaryClass={quant-ph}
}

\end{document}